\theoremstyle{plain}
\newtheorem{theorem}{Theorem}[section]
\theoremstyle{definition}
\theoremstyle{remark}
\def \bsm {\boldsymbol}
\def \btheta {\boldsymbol{\theta}} 
\def \balpha {\boldsymbol{\alpha}}
\def \blambda {\boldsymbol{\lambda}}
\def \bbeta {\boldsymbol{\beta}}
\def \bq {\boldsymbol{\mathrm{q}}}
\def \bt {\boldsymbol{t}}
\def \Bernoulli {\mathrm {Bernoulli}}
\DeclareMathOperator*{\argmax}{arg\,max}
\begin{document}

\twocolumn[

\arxivtitle{ A Semiparametric Approach to Causal Inference }

\arxivauthor{ Archer Gong Zhang \And Nancy Reid \And Qiang Sun }

\arxivaddress{ University of Toronto \And University of Toronto \And University of Toronto } ]

\begin{abstract}
In causal inference, an important problem is to quantify the effects of interventions or treatments. Many studies focus on estimating the mean causal effects; however, these estimands may offer limited insight since two distributions can share the same mean yet exhibit significant differences. Examining the causal effects from a distributional perspective provides a more thorough understanding.
In this paper, we employ a semiparametric density ratio model (DRM) to characterize the counterfactual distributions, introducing a framework that assumes a latent structure shared by these distributions. Our model offers flexibility by avoiding strict parametric assumptions on the counterfactual distributions. 
Specifically, the DRM incorporates a nonparametric component that can be estimated through the method of empirical likelihood (EL), using the data from all the groups stemming from multiple interventions. 
Consequently, the EL-DRM framework enables inference of the counterfactual distribution functions and their functionals, facilitating direct and transparent causal inference from a distributional perspective.
Numerical studies on both synthetic and real-world data validate the effectiveness of our approach. 
\end{abstract}

\section{Introduction}
\label{intro}
Causal inference constitutes a foundational pursuit in scientific research, aiming to unveil the underlying cause-and-effect relationships between variables from their observed phenomena.  Across diverse disciplines such as medicine, social sciences, economics, and public policy, the ability to infer causal relationships is pivotal for informing decision-making, designing effective interventions, and advancing knowledge. 

A significant line of research in causal inference focuses on mean-centric estimands. Yet, the effect of a treatment or intervention often may not be adequately reflected by means. Traditional mean-centric causal effects estimands such as the average treatment effect (ATE) and conditional average treatment effect (CATE) may overlook important aspects of causal effects. As pointed out by \citet{kennedy2023semiparametric}, ``Causal effects are often characterized with averages, which can give an incomplete picture of the underlying counterfactual distributions.''

Given that causal effects are quantified through random variables by nature, it is more sensible to understand and study causal effects from a distributional viewpoint. 
Many recent studies investigate the causal effects from different angles than the mean estimation. 
For example, \citet{deuber2021estimation} study the difference in the extremely high or low quantiles, which offers a more suitable measure of causal effects for extreme events. 
\citet{lin2021causal} consider the situation where the potential outcomes take values in a space of distribution functions and propose new estimands of causal effects. 
\citet{chernozhukov2013inference} and \citet{kennedy2023semiparametric} explore inferences on distributions of the potential outcomes.

In this paper, we propose an approach to causal inference from a distributional perspective. 
We adopt a semiparametric density ratio model (DRM) \citep{anderson1979multivariate} for the counterfactual distributions corresponding to multiple interventions, also referred to as treatments.  In a nutshell, the DRM postulates that the ratios of the counterfactual densities share a common parametric form.  Two key implications of the DRM are: 1) it does not assume a parametric assumption for each counterfactual distribution, and 2) the common parametric form of the density ratios can be specified by the user. 
Thanks to its semiparametric structure, the DRM is flexible and the nonparametric component can be adaptively estimated from the data. 
The DRM has proven successful in various fields and applications 
\citep{qin1998inferences, fokianos2001semiparametric, de2017bayesian} including machine learning \citep{sugiyama2012density}.
Most importantly, the DRM effectively captures the similarity and connection between multiple distributions.
We find this particularly useful in causal inference because the counterfactual distributions stemming from multiple interventions often share common features and latent structures. 
At the same time, the parametric component helps model the shifts in the counterfactual distributions due to various interventions. 
Additionally, the DRM allows utilization of the entire dataset to estimate each distribution, resulting in higher efficiency compared to using only the dataset corresponding to each individual distribution.

\section{The problem setup}
\label{setup}

For each individual in a target population, let $X \in \mathbb{R}^p$ denote a covariate vector;
let $A$ denote the treatment assignment with multiple levels $1,\ldots,K$; let $Y(a) \in \mathbb{R}$ denote the potential outcome had the patient received the treatment $A = a$ \citep{rubin1974estimating}.
Let $\mathcal{P}(\cdot)$ represent the target population distribution, which is the underlying distribution of $(\{Y(a)\}, X, A)$. 

For each subject in the study, only one treatment is assigned, and thus we only observe the actual outcome $Y = Y(A)$. 
The actual data are given by $\{ (Y_i, X_i, A_i): i = 1, \ldots, n \}$, where $i$ denotes the index for a subject in the study. 
We assume that, conditional on the covariates $X_i$ and treatment $A_i$, the responses $\{ Y_i: i = 1, \ldots, n \}$ are independent and identically distributed (i.i.d.).

Our research problem is to efficiently estimate the causal effects, also known as the treatment effects, on the target population $\mathcal{P}$, which in principle measure the discrepancies among the distributions of the potential outcomes $\{ Y(a): a=1,\ldots,K \}$. We refer to the distribution of $Y(a)$ as a \textbf{counterfactual distribution} throughout the paper. 

Once the counterfactual distributions are estimated, one may naturally construct estimators of the causal effect estimands, such as ATE and CATE, as a byproduct. Under a binary treatment regime with $a=0,1$, the ATE is defined as:
$
\tau \coloneqq \mathbbm{E}_{\mathcal{P}}[Y(1) - Y(0)],
$
and the CATE is defined as:
$
\tau(x) \coloneqq \mathbbm{E}_{\mathcal{P}}[Y(1) - Y(0) \mid X = x].
$

\subsection{Assumptions for identifiable causal inference}\label{assumptions}

We list some assumptions in this paper that are standard in the causal inference literature.

\paragraph{Stable Unit Treatment Value Assumption (SUTVA)} 
Let $Y_i(a)$ be the potential outcome and $A_i$ the treatment assignment for unit $i \in \{1,\ldots,n \}$. Assume that 
1) No interference: $Y_i(A_i)=Y_i(A_1,\ldots,A_n)$, and 
2) Consistency: $Y(a)=Y$ if $A=a$. 

The first part of SUTVA requires that the potential outcomes for any unit do not interfere with the treatment assigned to other units (no interference), and the second part states that there are no different versions of each treatment level. 
This SUTVA assumption allows us to make inference about the distribution of $Y(a)$ for the individuals assigned to treatment $a$ using their observed $Y$. 

\paragraph{Unconfoundedness}
We assume that for our target population $\mathcal{P}$, conditional on the covariates $X$, the potential outcomes $\{ Y(a): a=1,\ldots,K \}$ are independent of the treatment assignment $A$:
\begin{align}\label{unconfound}
Y(a) \perp A | X, 
\quad \forall a=1,\ldots,K.
\end{align}
In other words, there is no unmeasured confounding that affects both the treatment assignment $A$ and potential outcomes $(Y(0),\ldots,Y(K))$. 
The most important implication of the unconfoundedness assumption is for the counterfactual distribution $Y(a)|X$ to be estimable from the observed data. 
In particular, $Y|X, A=a$ has the same distribution as $Y(a)|X$. 
This is essential because in reality, we only observe $Y=Y(A)$ but not the potential outcomes $(Y(0),\ldots,Y(K))$, and this assumption allows us to infer the unobserved $Y(a)$ for the individuals with $A \neq a$ using the observed $Y(a)$ for the individuals with $A=a$ that share the same covariates $X$.

\paragraph{Positivity} 
We also assume that each individual has a nonzero probability of being assigned to a treatment:  
$
P(A = a | X=x) \in (0,1)
$
for all $x$ and $a=1,\ldots,K$.

\section{A semiparametric density ratio model}
\label{sec:drm}

In this paper, we directly target estimating the counterfactual distribution of $Y(a)$, which allows a distribution-centric causal inference. 
To further enable assessing the heterogeneous treatment effect, which is useful in applications such as designing an individualized treatment regime, we consider the following semiparametric density ratio model \citep{anderson1979multivariate} for $Y(a)|X=x$. 

Denoting by $G(\cdot | x, a)$ the conditional counterfactual distribution of $Y(a) | X=x$, the DRM assumes that the conditional counterfactual distributions are connected and share some common latent structures: 
\begin{align}\label{DRM}
\mathrm{d}G(y | x, a) \!=\! \exp \{ \alpha(x,a) + \bbeta^{\top}(x,a) \bq(y) \} \mathrm{d}G_{0}(y), 
\end{align}
for some user-specified vector-valued basis function $\bq(y)$ with dimension $d$  and functional parameters $\{\alpha(x,a), \bbeta(x,a)\}$ that depend on the covariates $x$ and treatment $a$.
Here, $G_0(y)$ is an {\bf unknown} baseline distribution (also known as a reference distribution) function that only depends on $y$, serving as a common dominating measure for the counterfactual distributions $\{ G(\cdot | x, a): a \}$. 
We shall see later that $G_0(y)$ can be estimated nonparametrically using the method of empirical likelihood.

We also require the elements of $\bq(y)$ to be linearly independent. 
The linear independence is a natural assumption: otherwise, some elements of $\bq(y)$ are redundant. 
In addition, $\alpha(x,a)$ can be viewed as a functional normalizing constant uniquely defined as 
\begin{align}
\alpha(x,a) = -\log \int \exp \{ \bbeta^{\top}(x,a) \bq(y) \} \mathrm{d}G_{0}(y).
\label{alpha_def}
\end{align}

\subsection{DRM models distribution shift due to treatments}

Our DRM is applicable not only to multiple but also to continuous treatments, as evidenced by the flexibility in specifying the functional parameter $\bbeta(\cdot, a)$ over $a$.  In this paper, we focus on discrete treatments with multiple levels indexed by $a=1,\ldots,K$ for illustration purposes. 
This naturally induces $K$ subpopulation distributions $\{G(y | x, a): a=1,\ldots,K\}$ corresponding to the $K$ treatments. Consequently, the DRM in \eqref{DRM} can be equivalently written as: $\forall a = 1,\ldots,K$,
\begin{align}
\frac{\mathrm{d}G(y | x, a)}{\mathrm{d}G(y | x, 1)} = \exp \{ \bar \alpha_a(x) + \bar \bbeta_a^{\top}(x) \bq(y) \}, 
\label{DRM_binary}
\end{align}
where $\bar \alpha_a(x)=\alpha(x,a)-\alpha(x,1)$ and $\bar \bbeta_a(x)=\bbeta(x,a)-\bbeta(x,1)$. 
This reformulation clarifies that the DRM specifies a parametric form, as defined by the user, that models the distribution shifts from the counterfactual distributions of $Y(a)|X$ to $Y(1)|X$. 
Specifically, for each $x$, $\bar \bbeta_a(x)$ quantifies the magnitude of these shifts, while $\bq(y)$ indicates their common direction across all $K-1$ distribution shifts. 
Thus, the DRM framework explicitly delineates the causal effects of treatments or interventions from a distributional viewpoint. Additionally, our model maintains high flexibility by avoiding strict parametric assumptions on the counterfactual distributions, facilitated by the unspecified baseline distribution $G_0(y)$.

\subsection{DRM generalizes generalized linear models}

In fact, when $\bq(y)=y$, our DRM,  as specified in \eqref{DRM} can be regarded as a semiparametric generalization of a generalized linear model (GLM) within a natural exponential family: 
\[ 
f(y|x, a; \theta, \phi) = h(y, \phi) \exp \{ d(\phi) [y \theta - A(\theta)] \}, 
\]
where $\theta=g^{-1}((x^\top, a) \bbeta)$ with $g(\cdot)$ being the link function, and $\phi$ is the dispersion parameter assumed to be known. 
The DRM offers more flexibility than the GLM in two significant ways. 
First, the function $h(y,\phi)$, which corresponds to our unspecified baseline distribution $G_0(y)$, is assumed to be known in GLM but not in our DRM. 
Second, the basis function $\bq(y)$ is user-specified, allowing for a much richer class of functions than $y$ alone to handle various types of responses. 
Our model is closely related to the proportional likelihood ratio model in \citet{luo2012proportional} and the model by \citet{diao2012maximum}, with the basis function $\bq(y)=y$ and the DRM parameter $\bbeta(x)=x^\top \btheta$. 
\citet{huang2012proportional} and \citet{huang2014joint} also consider the proportional likelihood ratio model with $\bq(y)=y$, but their DRM parameter $\bbeta(x)$ is implicitly defined by a conditional mean model: 
$
\mathbb{E}(Y_k | X_k) = \eta (X_k^{\top} \btheta). 
$

To further illustrate this generalization, we now examine logistic regression. 
Consider a dichotomous outcome: $Y=0,1$, and let $(X, A)$ together form a vector of covariates. 
The standard logistic regression model assumes that $Y|X=x, A=a$ follows a Bernoulli distribution with parameter $p(x,a)=\mathbbm{P}(Y=1 | X=x, A=a)$, and further that  
$
\mathrm{logit}\{p(x,a)\} = (x^\top,a) \btheta.
$
This logistic regression model can be equivalently written as 
\begin{align}
& \mathbbm{P}(Y=y | X=x, A=a) 
= p(x, a)^y [1-p(x, a)]^{1-y} \nonumber \\ 
& = \exp \Big \{ -\log [1+\exp((x^\top,a) \btheta)] + (x^\top,a) \btheta y \Big \}, 
\label{LR}
\end{align}
for $y=0,1$.
If we disregard the baseline distribution $G_0(y)$, then the model in \eqref{LR} is equivalent to the DRM in \eqref{DRM} with the parametrization: 
$
\bq(y)=y, 
\bbeta(x,a)=(x^\top,a) \btheta,
$
and 
$
\alpha(x,a)=-\log [1+\exp((x^\top,a) \btheta)].
$

\subsection{Choices of $q(y)$ and $\beta(x,a)$}

The choice of the basis function $\bq(y)$ has been explored in the literature under a marginal DRM for response data only, from which we may draw some ideas. 
Based on exploratory data analysis, we could choose  $\bq(y)=(y, y^{2})^{\top}$ when the histograms of $Y$ appear  bell-shaped, 
or  $\bq(y)=(y, \log y)^{\top}$ when the histograms suggest gamma distributions, thus encompassing the normal or gamma distribution families within the DRM.
To ensure a sufficiently rich DRM, one might also consider $\bq(y)=(|y|^{1/2}, y, y^{2}, \log|y|)^{\top}$, covering distributions such as normal, gamma, binomial, or Poisson.
When there are more than two populations (or levels of treatments), the adaptive basis proposed by \citet{zhang2022density} can be used, where  $\bq(y)$  is determined by the leading functional principal components of the induced log density ratios.

In this paper, as a balance between model complexity and robustness, we prespecify the basis function $\bq(y)$ and delegate the inference of the DRM to  $\bbeta(x,a)$. 
Specifically, we allow a user-specified parametric form for $\bbeta(x,a)$ and estimate its parameters.  
A natural choice would be $\bbeta(x,a)=\btheta_a^\top x$, where $\btheta_a$ represents a $p\times d$ matrix of unknown parameters in $\mathbb{R}$. 
Alternatively, one might include higher-order terms of $x$ or employ other methods such as splines to model $\bbeta(x,a)$. Without a known parametric form, estimating the infinite-dimensional $\bbeta(x,a)$ for each $x$ and $a$ becomes challenging, particularly in the absence of repeated  $x$ values in the data.
Here we adopt the specification $\bbeta(x,a)$ as $\bbeta(x;\btheta_a)$  with a known parametric form, and we assume $\bbeta(x;\btheta_a)$ is a smooth function in $\btheta_a$.

An infinite-dimensional approach to the DRM is developed in \citet{izbicki2014high}, where they employ a tensor product to create a joint basis for $x$ and $y$, inspiring an alternative approach to model the exponential tilt in the DRM. They express  $\bbeta^\top(x,a) \bq(y)$ as a sum of products of the eigenbases derived from the marginal distributions of 
$X$ and $Y$, with unknown parameters. 
This configuration leads to a DRM  with infinite-dimensional $\bbeta(x,a)$ and $\bq(y)$. 
However,  it is more interpretable to consider scenarios where low-dimensional latent structures common to $G(y|x,a)$ exist, thereby affirming the DRM as a meaningful semiparametric model.

\subsection{Identifiability of the DRM}

    If we replace the baseline distribution $G_0$ with any exponentially tilted version of it with the same $\bq(y)$, the DRM remains the same.
    This introduces a problem with model identifiability concerning the DRM.
    To address this issue, we restrict  the choices of $G_0$ to those with 
    \begin{align}
    \mathbbm{E}_{G_0}[\bq(Y)] = \bsm{0}.
    \label {ref_const}
    \end{align}
    We show in the following theorem that such constrained choices of $G_0$ render the DRM identifiable.

\begin{theorem}
\label{DRM_identifiable}
For fixed values of $x$ and $a$, let $\alpha = \alpha(x,a)$, $\bbeta = \bbeta(x,a)$.
Let the family of distributions $ \big \{ \exp \{\alpha + \bbeta^{\top} \bq(y)\} \mathrm{d}G_{0}(y): (\bbeta, G_0) \big \}$ be the DRM in \eqref{DRM} indexed by an infinite-dimensional parameter $(\bbeta, G_0)$.
Further, assume in element-wise sense that  
\begin{align}
\int \bq(y) \bq^{\top}(y) \exp \{\alpha + \bbeta^{\top} \bq(y)\} \mathrm{d}G_{0}(y) < \infty. 
\label{finite-sec-moment}
\end{align}
Under condition \eqref{ref_const}, the DRM is identifiable in the sense that 
if $(\bbeta, G_0) \neq (\bbeta^{'}, G_0^{'})$ with both $G_0$ and $G_0^{'}$ satisfying \eqref{ref_const}, then 
\[
\exp \{\alpha + \bbeta^{\top} \bq(y)\} \mathrm{d}G_{0}(y) \neq \exp \{\alpha^{'} + \bbeta^{'\top} \bq(y)\} \mathrm{d}G_{0}^{'}(y).
\]
\end{theorem}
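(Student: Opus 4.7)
The plan is to argue by contradiction: assume the two tilted measures coincide, reduce the problem to studying critical points of the cumulant generating function of $\bq(Y)$ under $G_0$, and use the mean-zero normalization together with strict convexity to force $(\bbeta,G_0)=(\bbeta',G_0')$.

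First I would write out what the assumed equality of measures gives. Since the density of $G_0'$ with respect to $G_0$ must then exist, one obtains
\begin{equation*}
\mathrm{d}G_0'(y) \;=\; \exp\!\big\{(\alpha-\alpha') + (\bbeta-\bbeta')^\top \bq(y)\big\}\,\mathrm{d}G_0(y).
\end{equation*}
Setting $c \coloneqq \alpha-\alpha'$ and $\bgamma \coloneqq \bbeta-\bbeta'$, integrating once against $\mathrm{d}G_0$ and once weighted by $\bq(y)$, and using that $G_0,G_0'$ are probability measures with $\mathbb{E}_{G_0}[\bq(Y)]=\mathbb{E}_{G_0'}[\bq(Y)]=\boldsymbol{0}$ by \eqref{ref_const}, yields the two identities
\begin{equation*}
\int e^{c+\bgamma^\top \bq(y)}\,\mathrm{d}G_0(y) = 1, \qquad \int \bq(y)\,e^{c+\bgamma^\top \bq(y)}\,\mathrm{d}G_0(y) = \boldsymbol{0}.
\end{equation*}

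Next I would introduce the log moment generating function $\phi(\bgamma) \coloneqq \log \int e^{\bgamma^\top \bq(y)}\,\mathrm{d}G_0(y)$, which is well-defined and finite in a neighbourhood of $\bgamma$ and of $\boldsymbol{0}$ by the second-moment condition \eqref{finite-sec-moment} (at $\bgamma$ the MGF equals $e^{-c}$ and is finite; at $\boldsymbol{0}$ it equals $1$; standard convexity arguments extend finiteness to the segment between them, where one can differentiate under the integral). The first identity above gives $c=-\phi(\bgamma)$, and the second identity, combined with $\mathbb{E}_{G_0}[\bq(Y)]=\boldsymbol{0}$, says exactly that
\begin{equation*}
\nabla \phi(\bgamma) \;=\; \boldsymbol{0} \;=\; \nabla \phi(\boldsymbol{0}).
\end{equation*}
So both $\boldsymbol{0}$ and $\bgamma$ are critical points of $\phi$.

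The key analytic step is then to show $\phi$ is strictly convex on the segment joining $\boldsymbol{0}$ and $\bgamma$, which forces $\bgamma=\boldsymbol{0}$. The Hessian of $\phi$ at any point is the covariance matrix of $\bq(Y)$ under the corresponding exponentially tilted measure, which is well-defined by \eqref{finite-sec-moment}. For any nonzero $\bu\in\mathbb{R}^d$, strict positivity of $\bu^\top \nabla^2\phi\, \bu$ amounts to $\bu^\top \bq(Y)$ being non-constant under the tilted law; if it were a.s.\ constant, then $\bu^\top \bq(y) - \mathrm{const}$ would vanish on the support of the tilted (hence of $G_0$) measure, contradicting the linear independence of the elements of $\bq(y)$ (interpreted, as is standard for DRMs, together with the constant function). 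Hence $\nabla^2\phi \succ 0$ along the segment, $\phi$ is strictly convex there, and a strictly convex function has at most one critical point. Therefore $\bgamma=\boldsymbol{0}$, giving $\bbeta=\bbeta'$; then $c=-\phi(\boldsymbol{0})=0$, so $\alpha=\alpha'$; and finally $\mathrm{d}G_0' = \mathrm{d}G_0$, contradicting $(\bbeta,G_0)\neq(\bbeta',G_0')$.

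I expect the main obstacle to be the clean justification of strict convexity of $\phi$: one needs to connect the stated linear independence of $\bq(y)$ to positive definiteness of the tilted covariance, and to ensure $\phi$ is actually twice differentiable along the relevant segment so that the critical-point-uniqueness argument applies. Everything else is essentially bookkeeping with the two normalization constraints \eqref{ref_const}.
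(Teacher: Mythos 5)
Your proof is correct and is essentially the same argument as the paper's: the paper defines $\bsm{H}(\bbeta)=\int \bq(y)\exp\{\alpha+\bbeta^\top\bq(y)\}\,\mathrm{d}G_0(y)$, which is exactly your $\nabla\phi(\bbeta)$, observes that it vanishes both at $\bsm{0}$ and at $\bgamma=\bbeta-\bbeta'$ by the normalization \eqref{ref_const}, and concludes $\bgamma=\bsm{0}$ because the Jacobian $\nabla\bsm{H}=\nabla^2\phi$ is a covariance matrix, positive definite by linear independence of the components of $\bq(y)$. Your framing via strict convexity and uniqueness of critical points of the cumulant generating function is a cleaner way to justify the final injectivity step (pointwise positive definiteness of a Jacobian alone does not give global injectivity, but it does once you note it is the Hessian of a convex $\phi$), and your observation that the linear-independence hypothesis is really needed for $\{1,q_1,\dots,q_d\}$ jointly is a fair subtlety that the paper elides, but these are refinements of the same route rather than a different proof.
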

One may replace the zero-mean condition in \eqref{ref_const} by that $\mathbbm{E}_{G_0}[\bq(Y)]$ takes any specific finite value, and we assume without loss of generality that this value is zero. 
We also note that if the specification $\bbeta(x,a)=\bbeta(x;\btheta_a)$ satisfies
\begin{align}
\bbeta(x;\btheta_a)=\bbeta(x;\widetilde \btheta_a) 
\implies \btheta_a = \widetilde \btheta_a, 
\label{DRM_identifiable2}
\end{align}
then $\{\btheta_a:a=1,\ldots,K\}$ are also identifiable. The proof of this theorem is provided in the supplementary material.

\section{Empirical likelihood inference}

In the DRM framework \eqref{DRM}, the reference distribution $G_0(y)$ is not restricted and remains unspecified.
Assigning a parametric model to  $G_0(y)$
would reduce the DRM to a fully parametric model, which risks model misspecification. To avoid this risk, we employ a nonparametric method called empirical likelihood (EL) \citep{owen2001empirical}, which retains the effectiveness of likelihood methods without imposing restrictive parametric assumptions, making it an ideal platform for statistical inference under the DRM. EL's success in coupling with the DRM has been well-documented \citep {qin1993empirical, qin1997goodness, chen2013quantile, cai2017hypothesis}

In this section, we introduce an EL-based inference procedure to estimate the counterfactual distributions $G(y|x,a)$.
For presentation clarity, we categorize the observed data of the response and covariates into $K$ groups according to their treatment assignment, with data from the group assigned treatment $A=k$ denoted as $\{ (Y_{kj}, X_{kj}): j=1,\ldots,n_k \}$. 
We represent the total sample size as  $n=\sum_{k} n_k$ and introduce new notation for the counterfactual distributions as $G_k(y|x)=G(y|x,k)$ for $k=1,\ldots,K$. 
Thus we have 
\begin{equation}\label{data_mod}
Y_{kj} | X_{kj} \! \sim \! G_k(y|x) \!=\! \exp \{\alpha_k(x) \!+\! \bbeta^{\!\top\!}(x;\btheta_k) \bq(y)\} G_0(y), 
\end{equation}
for $j = 1,\ldots,n_k$, where $\alpha_k(x):=\alpha(x,k)$. 

We define the conditional EL for the model \eqref{DRM}, which is the probability of observing the data under a statistical model.
Let $p_{kj}=\mathrm{d}G_0(y_{kj})=\mathbbm{P}(Y=y_{kj}; G_0)$ denote the probability of observing $y_{kj}$ under the baseline distribution $G_0$ in \eqref{DRM}. 
Following \eqref{data_mod}, the conditional EL is given as
\begin{align}
& L(G_1,\ldots,G_k)
= \prod_{k,j} \mathrm{d}G_k(y_{kj} | x_{kj}) \nonumber \\ 
& = \prod_{k,j} p_{kj} \Big \{ \exp \{\alpha_k(x_{kj}) + \bbeta^\top(x_{kj};\btheta_k) \bq(y_{kj})\} \Big \}.
\label{EL}
\end{align}
For simplicity, we omit the index ranges in our expressions.
Notably, the EL is $0$ if $G_0$ is a continuous distribution; however, this does not undermine the utility of EL because any continuous distribution can be approximated well by a discrete distribution, allowing safe use of discrete distributions.

We regard the EL in \eqref{EL} as a function of the parameters $\btheta \coloneqq (\btheta_1, \ldots, \btheta_K)$ and $G_0$, and write its logarithm as
\begin{align}\label{log-EL}
& \ell (\btheta,G_0)
= \log L(G_1,\ldots,G_k) \nonumber \\ 
& = \sum_{k,j} \log p_{kj} \!+\! \sum_{k,j} \alpha_k (x_{kj}) \!+\! \bbeta^\top(x_{kj};\btheta_k) \bq(y_{kj}).
\end{align}
Following the pioneering work of \citet{owen2001empirical} and \citet{qin1994empirical}, we introduce the profile log-EL as a function $\btheta$ based on the log-EL in \eqref{log-EL} along with relevant constraints:
\begin{align} 
\label{profile-log-EL}
& \widetilde \ell (\btheta) 
= \sup_{G_{0}} \big \{ \ell (\btheta, G_0):
 \sum_{r,i} p_{ri} = 1,\, 
\sum_{r,i} p_{ri} \bq (y_{ri}) = \bsm{0}, 
\nonumber \\ 
& \sum_{r, i} p_{r i} \exp \{ \alpha_k(x_{kj}) + \bbeta^{\top}(x_{kj};\btheta_k) \bq(y_{ri}) \} = 1,\, \forall k,j
\big \}
\end{align} 
In the above, the first constraint ensures that $G_0$ is a valid distribution, the second constraint is the EL version of the model identifiability constraint in \eqref{ref_const},  and the third constraint is equivalent to the EL version of the normalizing constant constraint in \eqref{alpha_def} ensuring that $G_k (y|x_{kj})$ is a valid distribution.  These follow naturally by viewing $G_0$ as a discrete distribution with support $\{ y_{ri}: r,i \}$ and probability masses $\{ p_{ri}: r,i \}$.

Like the classical likelihood approach, we construct an estimator of $\btheta$ with the profile log-EL \eqref{profile-log-EL}, termed the maximum EL estimator (MELE): 
\begin{align}
\label {theta-MELE}
\widehat \btheta = \argmax_{\btheta} \widetilde \ell (\btheta).
\end{align}
Based on the MELE $\widehat \btheta$,  we can derive estimators of the counterfactual distributions $G(y|x,k)=G_k(y|x)$. 
Let $\widehat G_0(y)$ be the fitted cumulative distribution function (CDF) characterized by the estimators $\{\widehat p_{ri}\}$. 
Then, the EL-DRM estimator of $G_k(y|x)$ is
\begin{align} 
\label{cond_dist_est}
& \widehat G(y|x,k) \coloneqq \widehat G_k(y|x) \nonumber \\ 
\!=\! & \sum_{r,i} \widehat p_{ri} \exp \{ \widehat \alpha_k(x) \!+\! \bbeta^{\!\top\!}(x;\widehat \btheta_k) \bq(y_{ri}) \} \mathbbm{1}(y_{ri} \leq y), 
\end{align} 
where $\widehat \alpha_k(x)$ $\coloneqq -\log \sum_{r,i} p_{ri} \exp \{ \bbeta^\top(x;\widehat \btheta_k) \bq(y_{ri}) \}$ and $\mathbbm{1}(\cdot)$ is the indicator function. 
The estimated counterfactual distributions $\widehat G(y|x,k)$ are fundamental for many causal effects estimators such as those for ATE, CATE, and quantile treatment effects, which we explore through simulations and real-data analysis.

\subsection{The Lagrange multiplier method}

The optimization problem \eqref{profile-log-EL}  can be solved by using the Lagrange multiplier method.  Let $ \bt = \{ t_0, \{ t_{k j} \}_{k,j} \} $ and $ \blambda $ be Lagrange multipliers, where $ \bt $ is a vector of length $ n+1 $ and $ \blambda $ has the same dimension as $ \bq (y) $. 
Furthermore, we treat $\alpha_k(x_{kj})$ as a free parameter, denoted by $\alpha_{kj}$, and let $\balpha=\{\alpha_{kj}: k,j\}$ represent the collection of all $\alpha_{kj}$. 
We shall see later that the dependency of $\alpha_k(x_{kj})$ on $\btheta_k$ and $G_0$ defined by its role as a normalizing constant in \eqref{alpha_def} is incorporated in a constraint.
We define the Lagrangian as:
\begin{align*} 
& \mathcal{L}(\btheta, G_0, \balpha, \bt, \blambda) \nonumber \\ 
& = \ell(\btheta, G_0) + t_0 \big \{ 1-\sum_{r,i} p_{ri} \big \} 
+ \blambda^{\top} \big \{ \bsm{0}-\sum_{r,i} p_{ri} \bq(y_{ri}) \big \} \nonumber \\ 
& \!\!+\! \sum_{k,j} t_{k j} \big \{ 1 \!-\! \sum_{r, i} p_{r i} \exp \{ \alpha_k (x_{kj}) \!+\! \bbeta^{\!\top\!}(x_{kj};\btheta_k) \bq(y_{ri}) \} \big \}.
\end{align*} 
According to the Karush--Kuhn--Tucker theorem \citep{boyd2004convex}, the solution to the constrained optimization problem in \eqref{profile-log-EL} satisfies  that for any $ \btheta $,  $\nabla_{p_{ri}, \balpha, \bt, \blambda} \mathcal{L}(\btheta, G_0, \balpha, \bt, \blambda)=\bsm{0}$.
Let $(\widehat p_{ri}, \widehat \balpha, \widehat \bt, \widehat \blambda)$ denote the solution, where some calculations show  $\widehat t_{kj} = 1, \widehat t_0 = 0$.
Moreover, 
\begin{equation}
\widehat p_{ri}^{\,-1} \!=\!
\! \sum_{k, j} \! \exp \{ \widehat \alpha_{k j} + \bbeta^{\!\top\!}(x_{kj};\btheta_k) \bq(y_{ri}) \}
+ \widehat \blambda^{\!\top\!} \bq(y_{ri}),
\label{eqn_p}
\end{equation} 
where $(\widehat \balpha, \widehat \blambda)$ satisfies
\begin{align} 
& 1 - \sum_{r,i} \widehat p_{ri} \exp \{ \widehat \balpha_{k j} + \bbeta^\top(x_{kj};\btheta_k) \bq (y_{ri}) \} = \bsm{0}, \nonumber \\ 
& \sum_{r, i} \widehat p_{r i} \bq(y_{ri}) = \bsm{0}.
\label{eqn_alpha}
\end{align}

\subsection {Algorithms to solve for $(\widehat p_{ri}, \widehat \balpha)$}

A key observation is that $ \widehat p_{r i} $ and $ \widehat \balpha $ implicitly depend on each other. 
Therefore, although it seems we have their closed-form solutions, this is not actually the case. 
One method to resolve this is by using an iterative algorithm in Algorithm \ref{alg1}. 
However, this method often requires many iterations to converge, especially since the dimension of $\widehat \balpha$ is as large as the sample size $n$. 

To ease the computation, another approach is to obtain a sensible approximation of $\widehat p_{r i}$ by using only the response data --  ignore the covariates $x$ -- through a marginal DRM fitted to $\{Y_i\}$.  
In this step, we also find a solution for $\blambda$.
Subsequently, we can easily compute $\widehat \balpha$ based on \eqref{eqn_alpha} because now $\widehat p_{r i}$ no longer depends on those $\widehat \alpha_{k j} = \widehat \alpha_k (x_{k j})$ values. 
Perhaps surprisingly, ignoring the covariates $x$ in the step of finding $\widehat p_{ri}$ does not  lose much information. 
To see this, we examine the expression \eqref{eqn_p} for $\widehat p_{ri}$, focussing on the exponential function part, and have:
\begin{align} 
& \sum_{k,j} \exp \{ \alpha_{k j} + x_{k j}^{\top} \btheta_k \bq (y_{r i}) \} \nonumber \\ 
& \! \approx \! \sum_{k} n_k \sum_{j = 1}^{n_k} \widehat{\mathbbm{P}}(X = x_{kj}; G_k) \frac {\mathrm{d}G_k(y_{ri} | x_{kj})} {\mathrm{d}G_0(y_{ri})} \nonumber \\ 
& \! \approx \! \sum_{k} n_k \frac {\mathrm{d} \widetilde G_k(y_{ri})} {\mathrm{d}G_0(y_{ri})} 
\! \approx \! \sum_{k} n_k \exp \{ \widetilde \balpha_{k} + \widetilde \bbeta_k^{\top} \bq (y_{r i}) \},
\label{approx_drm}
\end{align} 
where we use $\widehat{\mathbbm{P}}(X = x_{kj}; G_k)=1/n_k$ in the first approximation, 
law of total probability in the second approximation with $\widetilde G_k$ denoting the marginal distribution, 
and a $Y$-marginal DRM to approximate the density ratio $\widetilde G_k/G_0$ for the marginal distributions in the third approximation.
Now replacing the exponential part in \eqref{eqn_p} for $\widehat p_{ri}$ with the expression in \eqref{approx_drm} yields exactly the solution to $p_{r i}$ under a $Y$-marginal DRM. 
Our simulations suggest that this approximation method gives very similar results to the iterative algorithm \ref{alg1} in terms of estimating the causal effects. 
This approximate method of finding $(\widehat p_{ri}, \widehat \balpha)$ is summarized in Algorithm \ref{alg2}. 

\begin{algorithm}[t!]
\caption{Solve for $(\widehat p_{ri}, \widehat \balpha)$}
\label{alg1}
\begin{algorithmic}
\State {\bf Initialization}: $\{\widehat p_{ri}^{(0)}\}, \widehat \balpha^{(0)}, \widehat \blambda^{(0)}$
\State {\bf Previous iteration}: $\{\widehat p_{ri}^{(t-1)}\}, \widehat \balpha^{(t-1)}, \widehat \blambda^{(t-1)}$
\Repeat
\State {\bf Update $\widehat p_{ri}$}: obtain \\ 
\begin{center}
$
\widehat p_{ri}^{(t)} = \big \{ \sum_{k,j} \exp \{ \widehat \alpha_{kj}^{(t-1)} + \bbeta^\top(x_{kj};\btheta_k) \bq(y_{ri}) \} 
+ \widehat \blambda^{(t-1)\top} \bq(y_{ri}) \big \}^{-1}
$  
\end{center}
\State {\bf Update $\widehat \balpha$}: obtain \\ 
\begin{center}
 $\widehat \alpha_{kj}^{(t)}\!=\!-\log \sum_{r,i} \widehat p_{ri}^{(t)} \exp \{ \bbeta^\top(x_{kj};\btheta_k) \bq(y_{ri}) \}$ 
\end{center}
\State {\bf Update $\widehat \blambda$}: obtain $\widehat \blambda^{(t)}$ by solving \\ 
\begin{center}
$\sum_{r,i} \big \{ \sum_{k,j} \exp \{ \widehat \alpha_{kj}^{(t)} + \bbeta^\top(x_{kj};\btheta_k) \bq(y_{ri}) \}
+ \widehat \blambda^{\top} \bq(y_{ri}) \big \}^{-1} \bq(y_{ri}) = \bsm{0}$ 
\end{center}
\Until 
$\{\widehat p_{ri}^{(t)}\}, \widehat \balpha^{(t)}, \widehat \blambda^{(t)}$ 
converge
\end{algorithmic}
\end{algorithm}

\begin{algorithm}[t!]
\caption{Approximation for $(\widehat p_{ri}, \widehat \balpha)$}
\label{alg2}
\begin{algorithmic}
\State {\bf Input}: the response data $\{Y_{kj}\}$ and a $Y$-marginal DRM
\State {\bf Step 1}: fit the marginal DRM to the response data using the method of EL
\State {\bf Step 2}: obtain the EL-based estimators $\widehat p_{ri}^{\,\mathrm{mar}}$ of $\mathrm{d}G_0(y_{ri})$ under the $Y$-marginal DRM
\State {\bf Step 3}: $\widehat \alpha_{kj}^\mathrm{mar}\!=\!-\log \sum_{r,i} \widehat p_{ri}^{\,\mathrm{mar}} \!\exp \{ \bbeta^{\!\top}\!(x_{kj};\btheta_k) \bq(y_{ri}) \}$
\State {\bf Return}: $\{\widehat p_{ri}^{\,\mathrm{mar}}\}$ and $\{\widehat \alpha_{kj}^\mathrm{mar}\}$ as approximations of $\{\widehat p_{ri}\}$ and $\{\widehat \alpha_{kj}\}$
\end{algorithmic}
\end{algorithm}

\subsection{Theoretical properties} 

After finding $\widehat \balpha$, $\widehat p_{ri}$, and $\widehat \blambda$, we obtain the analytical expression of the profile log-EL defined in \eqref {profile-log-EL}: 
\begin{align*}
& \widetilde \ell(\btheta) 
= 
\sum_{r,i} \{ \widehat \alpha_{r i} + \bbeta^\top(x_{ri};\btheta_r) \bq(y_{ri}) \} - \\
&
\sum_{r,i} \log 
\sum_{k,j} \exp \{ \widehat \alpha_{kj} + \bbeta^\top(x_{kj};\btheta_k) \bq(y_{ri}) \}
+ \widehat \blambda^{\top} \bq (y_{ri}).
\label{profile-log-EL-explicit} 
\end{align*}
If $\widehat \btheta$ is in the interior of the parameter space, we can find the MELE $\widehat \btheta$, defined in \eqref{theta-MELE}, as the solution to the score equation $\nabla \widetilde \ell(\btheta) = \bsm{0}$. 
After some algebra, we can also find the analytical expression of the score function $\nabla \widetilde \ell(\btheta)$.
For notational simplicity, we let
$
\widehat{\mathbbm{E}}_{k}[\bq(Y) | x_{k j}] =
\sum_{r,i} \bq(y_{ri}) \widehat p_{ri} \exp \{ \widehat \alpha_{k j} + \bbeta^{\top}(x_{kj};\btheta_k) \bq(y_{ri}) \}
$
denote the first moment of $\bq(Y)$ with respect to the estimated conditional distribution 
$
\widehat G_{k}(y | x_{k j})
= 
\sum_{r,i} \widehat p_{r i} \exp \{ \widehat \alpha_{k j} + \bbeta^{\top}(x_{kj};\widehat \btheta_k) \bq(y_{ri}) \} \mathbbm {1} (y_{ri} \leq y).
$
Then, we have that 
\begin{equation*}
\frac {\partial \widetilde \ell (\btheta)} {\partial \btheta_{k}} = 
\sum_{j = 1}^{n_{k}} \nabla_{\btheta_k}^\top \bbeta(x_{kj};\btheta_k)
\left \{ \bq (y_{k j}) - \widehat {\mathbbm{E}}_{k}[\bq(Y) | x_{k j}] \right \}.
\label{score-fun-explicit}
\end{equation*}
The following theorem states that the MELE $\widehat \btheta$ in \eqref{theta-MELE} and the EL-based estimator of the baseline distribution $\widehat G_0(y) \coloneqq \sum_{r,i} \widehat p_{ri} \mathbbm{1}(y_{ri} \leq y)$ are consistent under some conditions imposed solely for technical reasons. 
The proof is provided in the supplementary material.
\begin{theorem}
\label{consistency}
Under the assumptions in Section~\ref{sec:drm} and the conditions that: 
(i) the covariate space $\mathcal{X}$ for $x$ is bounded, the response space $\mathcal{Y}$ for $y$ is compact, and the parameter space $\mathcal{B}$ for $\btheta_k$ is compact; 
(ii) $\bbeta(x; \btheta_k)$ satisfies \eqref{DRM_identifiable2}; and 
(iii) $\{\bbeta(x; \btheta_k): \btheta_k \in \mathcal{B} \}$ is a uniformly bounded and Glivenko--Cantelli class
of functions on $\mathcal{X}$, 
then $\widehat \btheta$ is consistent and $\widehat G_0(y)$ converges uniformly to the truth $G_0^*(y)$.
\end{theorem}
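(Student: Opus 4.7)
The plan is to treat the maximum EL estimator as a standard M-estimator and apply the Wald-style consistency route: establish uniform convergence of a normalized criterion function over the compact parameter space $\mathcal{B}$, then show the population limit has a well-separated maximum at the true parameter $\btheta^*$. Concretely, I would consider the normalized profile log-EL $M_n(\btheta) \coloneqq n^{-1} \widetilde\ell(\btheta)$ and identify its pointwise limit $M(\btheta)$ by using the representations for $\widehat p_{ri}$ and $\widehat \alpha_{kj}$ from the Lagrangian analysis. Viewing the EL-based $\widehat G_0$ as an exponentially tilted empirical distribution, the limit $M(\btheta)$ should take the form $\sum_k \pi_k \mathbb{E}_k[\,\bbeta^\top(X;\btheta_k)\bq(Y) + \alpha_k^\circ(X;\btheta, G_0^*)\,] - C(\btheta)$, where $\pi_k = \lim n_k/n$ (assumed positive) and $\alpha_k^\circ$ is the population normalizing constant determined via \eqref{alpha_def} with $G_0^*$ in place of $G_0$. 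The true $\btheta^*$ solves the population score equation $\mathbb{E}_k\{\nabla_{\btheta_k}\bbeta(X;\btheta_k)[\bq(Y) - \mathbb{E}_k(\bq(Y)|X)]\} = \bsm{0}$.

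For the uniform convergence, condition (i) (compactness of $\mathcal{X}$, $\mathcal{Y}$, $\mathcal{B}$) and condition (iii) (the Glivenko--Cantelli class $\{\bbeta(\cdot;\btheta_k): \btheta_k \in \mathcal{B}\}$) are the key ingredients. The challenge is that the data dependence enters through the implicit quantities $\widehat \balpha, \widehat \blambda, \{\widehat p_{ri}\}$, which are themselves indexed by $\btheta$. I would first show that, along any sequence $\btheta_n \to \btheta \in \mathcal{B}$, the KKT system \eqref{eqn_p}--\eqref{eqn_alpha} has a unique solution whose empirical version converges to a deterministic population solution; continuity of these maps in $\btheta$, combined with compactness, upgrades pointwise to uniform convergence via standard equicontinuity arguments. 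Boundedness of $\bbeta(\cdot;\cdot)$ and of $\bq(\cdot)$ on the compact $\mathcal{Y}$ guarantees the exponential terms are bounded away from $0$ and $\infty$, preventing degeneracy of $\widehat p_{ri}$.

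For the unique population maximizer, I would argue by contradiction. If $M(\widetilde\btheta) = M(\btheta^*)$ for some $\widetilde\btheta \ne \btheta^*$, the concavity of the log-partition function and Jensen's inequality in EL theory (Qin and Lawless, 1994; Qin and Zhang, 1997) force the tilted family indexed by $\widetilde\btheta$ to generate the same conditional distributions $G_k(y|x)$ as the truth. Theorem~\ref{DRM_identifiable} then implies the two DRM parameterizations coincide, and condition (ii), namely \eqref{DRM_identifiable2}, upgrades this to $\widetilde\btheta = \btheta^*$, a contradiction. Combined with uniform convergence and compactness of $\mathcal{B}$, the standard argmax theorem (e.g., van der Vaart, 1998, Theorem 5.7) yields $\widehat\btheta \xrightarrow{P} \btheta^*$.

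Consistency of $\widehat G_0$ then follows from the representation $\widehat G_0(y) = \sum_{r,i} \widehat p_{ri}\,\mathbbm{1}(y_{ri} \le y)$. Substituting the estimated $\widehat p_{ri}$ from \eqref{eqn_p}, this is a continuous functional of $\widehat \btheta$, $\widehat \balpha$, $\widehat \blambda$ and the empirical measure of $(Y_{ri}, X_{ri})$; applying continuous mapping together with the classical Glivenko--Cantelli theorem for the empirical CDF and the $\btheta$-Glivenko--Cantelli class in condition (iii) gives $\sup_y |\widehat G_0(y) - G_0^*(y)| \xrightarrow{P} 0$. The main obstacle throughout is the coupled, implicit nature of $(\widehat \btheta, \widehat \balpha, \widehat \blambda, \{\widehat p_{ri}\})$: a clean separation between the profiled parameter $\btheta$ and the profiled-out nuisance requires verifying, via an implicit function / continuity argument on the KKT system, that the nuisance map is uniformly well-behaved over $\btheta \in \mathcal{B}$, which is where the uniform boundedness in condition (iii) and the compactness assumptions are indispensable.
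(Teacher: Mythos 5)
Your high-level plan is a different route from the paper's. You propose a Wald/argmax-theorem argument: establish uniform convergence of the normalized profile criterion $M_n(\btheta)=n^{-1}\widetilde\ell(\btheta)$ over the compact $\mathcal{B}$, identify a well-separated population maximizer $\btheta^*$, and invoke the continuous mapping theorem for $\widehat G_0$. The paper instead follows the semiparametric-MLE strategy of Murphy and van der Vaart (1997): it fixes the \emph{joint} estimator $(\widehat\btheta,\widehat G_0)$, constructs a benchmark $(\btheta^*,\widetilde G_0)$ where $\widetilde p_{ri}$ is the KKT formula evaluated at the true $(\balpha^*,\btheta^*,\blambda=\bsm{0})$, shows $\widetilde G_0\to G_0^*$ uniformly via Glivenko--Cantelli, extracts convergent subsequences $(\widehat\btheta,\widehat G_0)\to(\btheta^\dagger,G_0^\dagger)$ by Helly's selection theorem and Bolzano--Weierstrass (with $\mathcal{Y}$-compactness ensuring $G_0^\dagger$ is a CDF), passes to the limit in the single inequality $n^{-1}\ell(\btheta^*,\widetilde G_0)-n^{-1}\ell(\widehat\btheta,\widehat G_0)\le 0$ to obtain a nonpositive Kullback--Leibler divergence, and then uses nonnegativity of KL plus the identifiability in Theorem~\ref{DRM_identifiable} and \eqref{DRM_identifiable2} to conclude $(\btheta^\dagger,G_0^\dagger)=(\btheta^*,G_0^*)$. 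The subsequence principle then upgrades this to convergence of the whole sequence. Your contradiction step via ``concavity of the log-partition function plus Jensen plus identifiability'' is, in spirit, the same KL argument the paper uses, so the core identification idea aligns.

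Where your route has a genuine gap is the uniform-convergence step. The profiled criterion $\widetilde\ell(\btheta)$ depends on $\btheta$ through the implicit solutions $(\widehat p_{ri}(\btheta),\widehat\balpha(\btheta),\widehat\blambda(\btheta))$ of the KKT system, and crucially the nuisance $\widehat\balpha$ has dimension $n$ (one $\alpha_{kj}$ per observation) that grows with the sample size. Your sketch invokes ``continuity of these maps in $\btheta$, combined with compactness'' to upgrade pointwise to uniform convergence, but an equicontinuity argument that is uniform in $n$ is exactly what is missing and is nontrivial precisely because of this growing-dimension nuisance; conditions (i)--(iii) as stated are not obviously sufficient to deliver it. The paper sidesteps this obstacle entirely: it never needs uniform convergence of the profiled criterion over $\btheta$, only convergence of two pinned-down quantities $n^{-1}\ell(\btheta^*,\widetilde G_0)$ and $n^{-1}\ell(\widehat\btheta,\widehat G_0)$ along a subsequence, with Helly's theorem supplying the compactness device for the infinite-dimensional component $\widehat G_0$. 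If you want to pursue the argmax-theorem route, you should either (a) replace uniform convergence by the weaker ``$M_n(\btheta^*)-M_n(\widehat\btheta)\to 0$ plus well-separated maximum'' form of van der Vaart's Theorem 5.7 (which mirrors what the paper actually does), or (b) establish a genuine Lipschitz-in-$\btheta$ bound on $n^{-1}\widetilde\ell(\btheta)$ with a constant not depending on $n$, which would require controlling the coupled KKT map explicitly. The uniform consistency of $\widehat G_0$ also needs the compactness-of-$\mathcal{Y}$/Helly device rather than just continuous mapping, since $\widehat G_0$ is itself infinite-dimensional.
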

We remark that continuous $\bbeta(x; \btheta_k)$ on the product space $\mathcal{X} \times \mathcal{B}$ with compact $\mathcal{X}$ and $\mathcal{B}$ satisfies condition (iii) in Theorem~\ref{consistency}. 
Our example $\bbeta(x; \btheta_k)=\btheta_k^\top x$ is an instance.

\section{Numerical studies}

This section evaluates the effectiveness of our proposed method using synthetic experiments and a real-data analysis.

\subsection{Causal effect estimands and estimators}

For illustration, we focus on the ATE, CATE, and quantile treatment effects on the treated (QTET) in our numerical studies. 
For a binary treatment, the QTET at quantile level $p \in (0, 1)$ is defined as 
\[
\mathrm{QTET}(p) = F_{Y(1)|A=1}^{-1}(p)-F_{Y(0)|A=1}^{-1}(p), 
\]
where $F_{Y(a)|A=1}^{-1}(p)$ is the quantile function for $Y(a)|A=1$. 
The EL-DRM estimators for these estimands are constructed using the estimated counterfactual distributions $\widehat G(y|x,a)$ in \eqref{cond_dist_est}. 

We also compare our method with some competing methods in our simulations. For ATE estimation, we include the classical regression-based G-formula, logistic model-based Inverse Probability Weighting (IPW), and Augmented Inverse Probability Weighted (AIPW) estimators. 
For QTET estimation, we include 1) the IPW-based quantile estimators by \citet{firpo2007efficient}, 
and 2) methods by \citet{chernozhukov2013inference} for  estimating counterfactual distributions using quantile regression (CF(QR)) and using distribution regression under a logistic model (CF(logit)). 
The {\tt R} packages for implementing these QTET methods are available as: {\tt qte} \citep{qte_package} and {\tt Counterfactual} \citep{counterfactual_package}. 

\subsection{Simulation studies}

\subsubsection{Data generated from Gaussian distribution}

We first generate synthetic data from Gaussian distributions according to the following data generating process:
\begin{gather*} 
A \!\sim\! \Bernoulli (0.5), 
X_1 \!\sim\! N(1,1), 
X_2 | X_1, A \!\sim\! N(2AX_1, 1),
\\ 
Y \!=\! 1 \!+\! A \!+\! X_1 \!+\! 2 A X_1 \!+\! 0.5 A X_1^2 \!+\! AX_2 \!+\! \varepsilon, 
\, \,
\varepsilon \!\sim\! N (0, 1).
\end{gather*} 
The true CATE is $\tau(x)=1+2x_1-0.5x_1^2+x_2$, and the true ATE is $\tau=3$. 
The true QTET at the five selected quantile levels $p=0.1, 0.3, 0.5, 0.7, 0.9$ are $(0.091, 2.847, 4.444,$ $5.792, 7.328)$.

We applied the DRM to simulated data for the counterfactual distributions $G(y | x_1, x_2, a)$ with $\bq(y)=(y, y^2)^\top$ and $\bbeta^\top_{\mathrm{cor}}(x,a)=(x_1, x_1^2, x_2) \btheta_a$ to ensure the model is correctly specified (``full''). 
To assess the impact of potential model misspecification, we also use $\bbeta^\top_{\mathrm{mis1}}(x,a)=(x_1, x_2) \btheta_a$ (``mis1'') and $\bbeta^\top_{\mathrm{mis2}}(x,a)=x_1 \btheta_a$ (``mis2''). 
The performance of QTET and ATE estimators are summarized in Figure~\ref{S1_boxplot}; detailed numerical results are in the supplementary material. 
The results are obtained based on 1000 simulation repetitions with a sample size of $n=1000$. 
Each simulation repetition takes less than a minute (about 10 seconds for DRM) to run all the methods on an Apple M1 Pro laptop with 32GB memory and 512GB storage.

We observe that the DRM estimator of ATE performs comparably to the G-formula estimator with the correct regression model (``full''), and significantly better than the IPW estimator due to the incorrectly specified logistic regression model for the propensity score in IPW. Furthermore, the DRM estimators remain effective even when the model is mildly misspecified by excluding the quadratic term $x_1^2$ from $\bbeta(x,a)$. In contrast, the AIPW method performs poorly when $x_1^2$ is excluded from the model fitting (``mis1''). 
As expected, all the methods lead to biased estimators when only $x_1$ is considered in the analysis (``mis2'').
For QTET estimation, the DRM method has much smaller bias than competing methods. The standard errors are slightly larger than the CF(QR), which is due to their semiparametric nature.  Interestingly, in all QTET estimation methods, omitting some covariates does not degrade performance. The estimated CATE $\tau(x)$ from one simulation run is illustrated in Figure~\ref{S1_CATE}, showing that both the correctly specified and mildly misspecified DRM provide a satisfactory depiction of the true CATE. 

\begin{figure}
\centering 
\includegraphics[width = \textwidth/2, keepaspectratio]{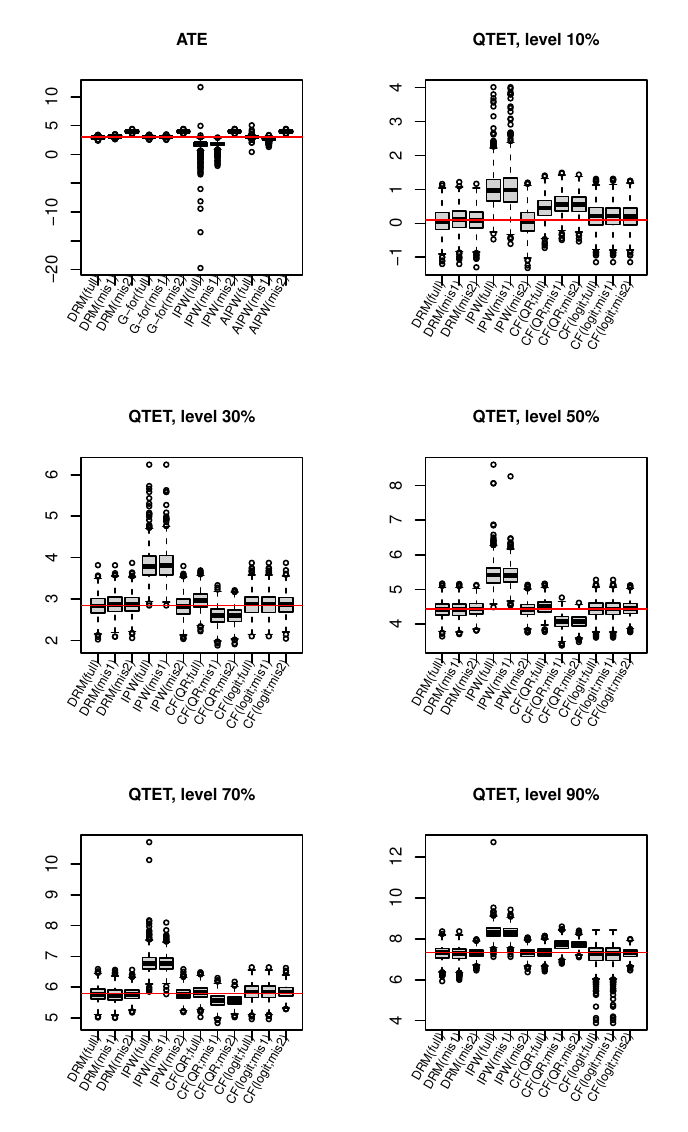}
\caption{Boxplots of the ATE and QTET estimators using various methods based on 1000 simulation repetitions of Gaussian data. 
The red lines depict the truth.}
\label{S1_boxplot}
\end{figure}

\begin{figure}
\centering 
\includegraphics[width = \textwidth*3/13, keepaspectratio]{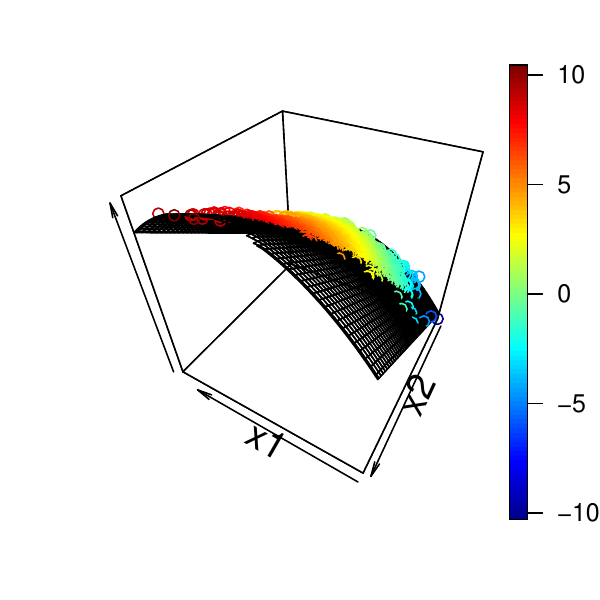}
\includegraphics[width = \textwidth*3/13, keepaspectratio]{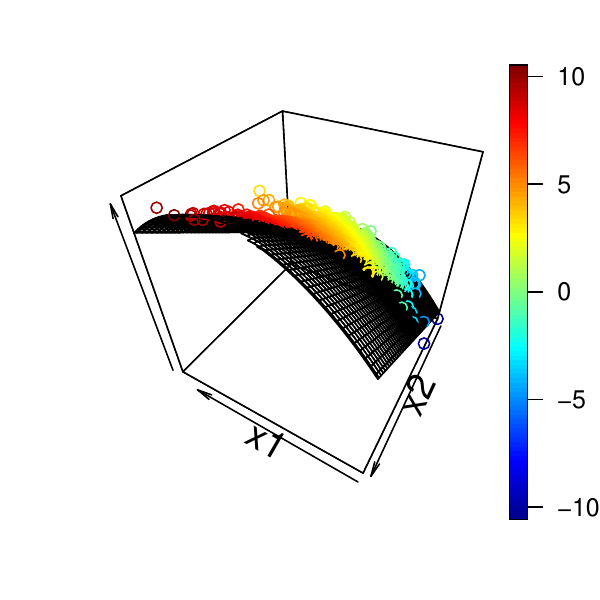} 
\caption{CATE plots based on one simulated Gaussian data. 
Use DRM with $\bq(y)=(y, y^2)^\top$.
Left: $\bbeta_{\mathrm{cor}}(x,a)$.
Right: $\bbeta_{\mathrm{mis1}}(x,a)$. 
The black surface depicts the truth.}
\label{S1_CATE}
\end{figure}

\subsubsection{Data generated from gamma distribution}

To simulate scenarios that don't exhibit Gaussianity, we also conduct experiments for data from gamma distributions:
\begin{gather*} 
A \sim \Bernoulli (0.5), 
\quad  
X_1 \sim \mathrm{Gamma}(1,0.5), \\ 
X_2 | X_1, A \!\sim\! \mathrm{Gamma}((A+1)(X_1+1), 1), \\ 
\mu(x_1, a) \!\sim\! \mathrm{Gamma}((a+1)(x_1+1), 1),
\\ 
Y = 0.5 (A+1) \mu (X_1, A) + 0.5 (A+1) X_2.
\end{gather*} 
The true CATE is $\tau(x)=1.5(1+x_1)+0.5x_2$, and the true ATE is $\tau=3.375$. 
The true QTET at levels $p=0.1, 0.3, 0.5, 0.7, 0.9$ are (1.730, 2.619, 3.411, 4.384, 6.188).

We fit the DRM to the data both with the correct specification: $\bq_{\mathrm{cor}}(y)=(y, \log y)^\top$ and $\bbeta^\top_{\mathrm{cor}}(x,a)=(x_1, x_2) \btheta_a$ (``full''), and with mild model misspecification: $\bq_{\mathrm{mis}}(y)=(y, y^2)^\top$ (``mis1'') and $\bbeta^\top_{\mathrm{mis}}(x,a)=x_1 \btheta_a$ (``mis2''). 
Correspondingly for the other competing methods, we also consider both using the full set of covariates $(x_1,x_2)$ (``full'') and using only $x_1$ (``mis'').
The results are illustrated in Figure~\ref{S2_boxplot}, with CATE plots and numerical results provided in the supplementary material, from which we observe similar phenomena as the Gaussian situation. 
We also provide some additional simulation results for other distributions in the supplementary material.

\begin{figure}
\centering 
\includegraphics[width = \textwidth/2, keepaspectratio]{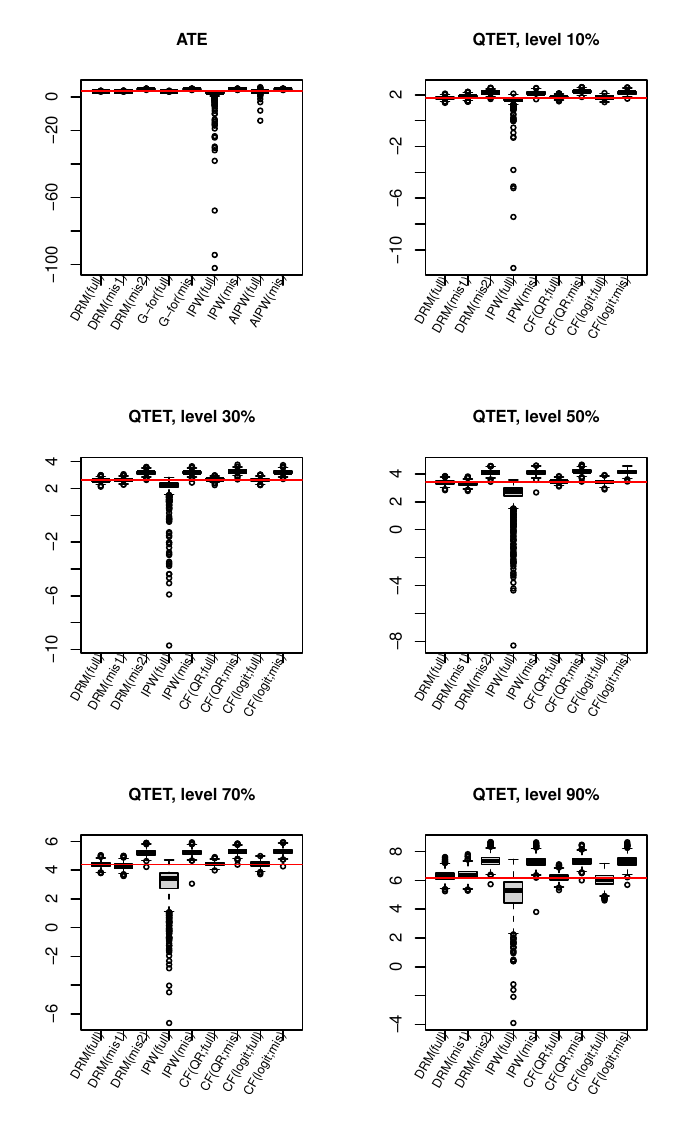}
\caption{Boxplots of the ATE and QTET estimators using various methods based on 1000 simulation repetitions of gamma data. 
The red lines depict the truth.}
\label{S2_boxplot}
\end{figure}

\begin{figure}
\centering 
\includegraphics[width = \textwidth*4/11, keepaspectratio]{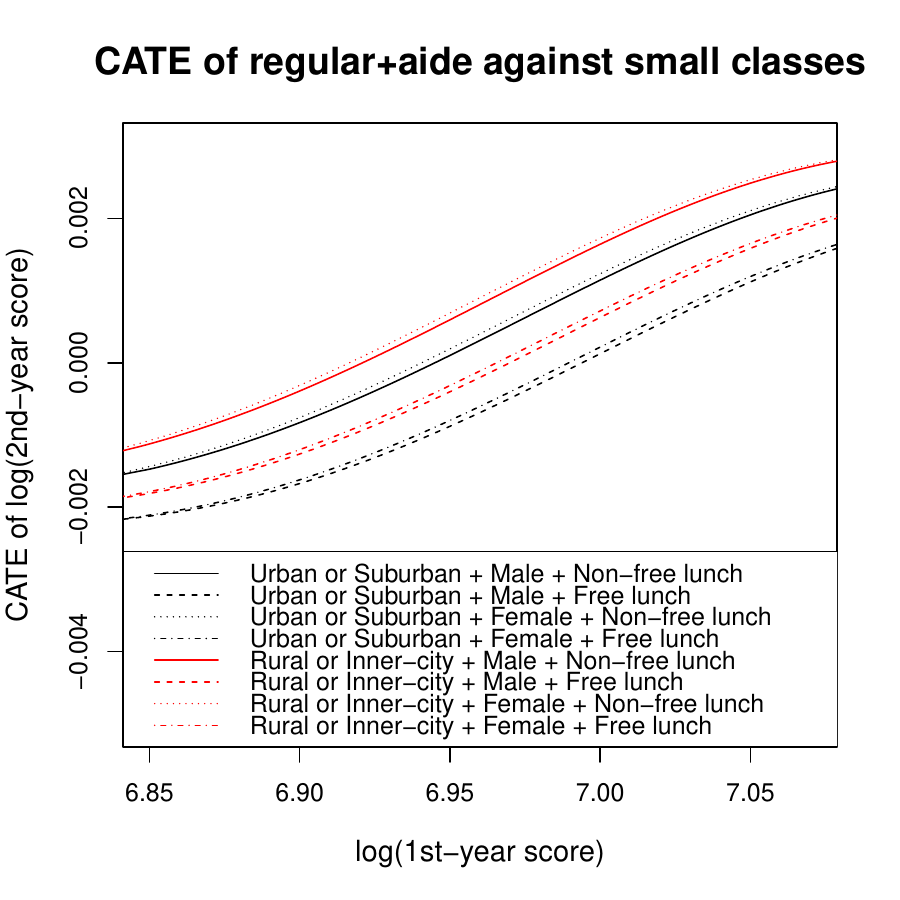}
\caption{Plot of the estimated CATE for the eight groups of students in STAR real data.}
\label{real_CATE}
\end{figure}

\subsection{Real-data analysis: STAR project}

We also analyze a real data from the Tennessee Student Teacher Achievement Ratio (STAR) project \citep{DVN/SIWH9F_2008}. 
The relevant data used in our analysis is also accessible from the {\tt R} package {\tt AER} \citep{AER_package}. 
The STAR project was a randomized experiment that studied the effect of class size on students' test scores,  monitoring students from kindergarten through third grade. 
Students were randomly assigned to one of three interventions: small class (13 to 17 students per teacher), regular class (22 to 25 students per teacher), and regular-with-aide class (22 to 25 students with a full-time teacher's aide).

For the analysis, we treat the class size in second grade as the treatment variable, with three levels, and use the log-transformed sum of second-grade reading and math test scores as the outcome variable to investigate the causal effect of class size. Additionally, our covariates include the log-transformed test score from first grade, gender (female or male), qualification for free lunch (free or non-free), and school area (rural or inner-city versus urban or suburban). After removing missing values, our dataset comprises 4370 students: 1426 in small class, 1412 in regular class, and 1532 in regular-with-aide class.
We apply the DRM with $\bq(y)=y$ and $\bbeta^\top(x,a)=x^\top \btheta_a$ to this data. 
The estimated CATE for regular-with-aide versus small classes is shown in Figure~\ref{real_CATE}. 
The CATEs appear to cluster differently based on the demographic factors captured by the covariates, which highlights the heterogeneous impact of class size on student outcomes.
We also show the plot for the estimated counterfactual distributions in the supplementary material.

\section{Discussion}

This paper presents a semiparametric approach to causal inference based on a DRM, which characterizes causal effects from a distributional perspective. The proposed DRM comprises two parts: a parametric part capturing common latent structures across all counterfactual distributions, and a nonparametric part representing the unspecified baseline distribution $G_0$. We use empirical likelihood to adaptively estimate $G_0$ from the observed data and propose an approximation algorithm to facilitate computation.
In situations where researchers are uncertain about the underlying family of the counterfactual distributions or seek to enhance robustness by avoiding strict parametric assumptions, the DRM offers a viable option. 
We want to remark that the objective of this paper is not to achieve state-of-the-art performance, but rather to introduce a framework enabling robust, statistically interpretable, and distribution-centric causal inference.

Due to the discrete nature of empirical likelihood, the proposed estimators of counterfactual distributions are inherently discrete. This limitation may constrain the applicability of our method if the desired causal effect estimand relies on smoothness measures of the underlying counterfactual distributions.  Another limitation is on the specification of $\bq(y)$ and $\bbeta(x,a)$. Although we offered some insights into these specifications through heuristic approaches, there is a need to develop methods for the data-adaptive selection of these parameters with both interpretability and theoretical guarantees.
Furthermore, while our simulations explore the ramifications of potential model misspecification, we believe a comprehensive study of the robustness of the DRM would be valuable for the community. We leave these tasks to future research.

\bibliographystyle{apalike}
\bibliography{Archer_bib2024}

\begin{thebibliography}{}

\bibitem[Achilles et~al., 2008]{DVN/SIWH9F_2008}
Achilles, C., Bain, H.~P., Bellott, F., Boyd-Zaharias, J., Finn, J., Folger, J., Johnston, J., and Word, E. (2008).
\newblock {Tennessee's Student Teacher Achievement Ratio (STAR) project}.

\bibitem[Anderson, 1979]{anderson1979multivariate}
Anderson, J. (1979).
\newblock Multivariate logistic compounds.
\newblock {\em Biometrika}, 66(1):17--26.

\bibitem[Boyd and Vandenberghe, 2004]{boyd2004convex}
Boyd, S. and Vandenberghe, L. (2004).
\newblock {\em Convex Optimization}.
\newblock Cambridge University Press.

\bibitem[Cai et~al., 2017]{cai2017hypothesis}
Cai, S., Chen, J., and Zidek, J.~V. (2017).
\newblock Hypothesis testing in the presence of multiple samples under density ratio models.
\newblock {\em Statistica Sinica}, 27:761--783.

\bibitem[Callaway, 2022]{qte_package}
Callaway, B. (2022).
\newblock {\em qte: Quantile Treatment Effects}.
\newblock R package version 1.3.1.

\bibitem[Chen and Liu, 2013]{chen2013quantile}
Chen, J. and Liu, Y. (2013).
\newblock Quantile and quantile-function estimations under density ratio model.
\newblock {\em The Annals of Statistics}, 41(3):1669--1692.

\bibitem[Chen et~al., 2020]{counterfactual_package}
Chen, M., Chernozhukov, V., Fernandez-Val, I., and Melly, B. (2020).
\newblock {\em Counterfactual: Estimation and Inference Methods for Counterfactual Analysis}.
\newblock R package version 1.2.

\bibitem[Chernozhukov et~al., 2013]{chernozhukov2013inference}
Chernozhukov, V., Fern{\'a}ndez-Val, I., and Melly, B. (2013).
\newblock Inference on counterfactual distributions.
\newblock {\em Econometrica}, 81(6):2205--2268.

\bibitem[De~Oliveira and Kedem, 2017]{de2017bayesian}
De~Oliveira, V. and Kedem, B. (2017).
\newblock Bayesian analysis of a density ratio model.
\newblock {\em The Canadian Journal of Statistics}, 45(3):274--289.

\bibitem[Deuber et~al., 2021]{deuber2021estimation}
Deuber, D., Li, J., Engelke, S., and Maathuis, M.~H. (2021).
\newblock Estimation and inference of extremal quantile treatment effects for heavy-tailed distributions.
\newblock {\em arXiv preprint arXiv:2110.06627}.

\bibitem[Diao et~al., 2012]{diao2012maximum}
Diao, G., Ning, J., and Qin, J. (2012).
\newblock Maximum likelihood estimation for semiparametric density ratio model.
\newblock {\em The International Journal of Biostatistics}, 8(1).

\bibitem[Firpo, 2007]{firpo2007efficient}
Firpo, S. (2007).
\newblock Efficient semiparametric estimation of quantile treatment effects.
\newblock {\em Econometrica}, 75(1):259--276.

\bibitem[Fokianos et~al., 2001]{fokianos2001semiparametric}
Fokianos, K., Kedem, B., Qin, J., and Short, D.~A. (2001).
\newblock A semiparametric approach to the one-way layout.
\newblock {\em Technometrics}, 43(1):56--65.

\bibitem[Huang, 2014]{huang2014joint}
Huang, A. (2014).
\newblock Joint estimation of the mean and error distribution in generalized linear models.
\newblock {\em Journal of the American Statistical Association}, 109(505):186--196.

\bibitem[Huang and Rathouz, 2012]{huang2012proportional}
Huang, A. and Rathouz, P.~J. (2012).
\newblock Proportional likelihood ratio models for mean regression.
\newblock {\em Biometrika}, 99(1):223--229.

\bibitem[Izbicki et~al., 2014]{izbicki2014high}
Izbicki, R., Lee, A., and Schafer, C. (2014).
\newblock High-dimensional density ratio estimation with extensions to approximate likelihood computation.
\newblock In {\em Artificial intelligence and statistics}, pages 420--429. PMLR.

\bibitem[Kennedy et~al., 2023]{kennedy2023semiparametric}
Kennedy, E.~H., Balakrishnan, S., and Wasserman, L. (2023).
\newblock Semiparametric counterfactual density estimation.
\newblock {\em Biometrika}, 110(4):875--896.

\bibitem[Kleiber and Zeileis, 2008]{AER_package}
Kleiber, C. and Zeileis, A. (2008).
\newblock {\em Applied Econometrics with {R}}.
\newblock Springer-Verlag, New York.
\newblock {ISBN} 978-0-387-77316-2.

\bibitem[Lin et~al., 2021]{lin2021causal}
Lin, Z., Kong, D., and Wang, L. (2021).
\newblock Causal inference on distribution functions.
\newblock {\em arXiv preprint arXiv:2101.01599}.

\bibitem[Luo and Tsai, 2012]{luo2012proportional}
Luo, X. and Tsai, W.~Y. (2012).
\newblock A proportional likelihood ratio model.
\newblock {\em Biometrika}, 99(1):211--222.

\bibitem[Murphy et~al., 1997]{murphy1997maximum}
Murphy, S., Rossini, A., and van~der Vaart, A.~W. (1997).
\newblock Maximum likelihood estimation in the proportional odds model.
\newblock {\em Journal of the American Statistical Association}, 92(439):968--976.

\bibitem[Owen, 2001]{owen2001empirical}
Owen, A.~B. (2001).
\newblock {\em Empirical Likelihood}.
\newblock Chapman \& Hall/CRC, New York.

\bibitem[Qin, 1993]{qin1993empirical}
Qin, J. (1993).
\newblock Empirical likelihood in biased sample problems.
\newblock {\em The Annals of Statistics}, 21(3):1182--1196.

\bibitem[Qin, 1998]{qin1998inferences}
Qin, J. (1998).
\newblock Inferences for case-control and semiparametric two-sample density ratio models.
\newblock {\em Biometrika}, 85(3):619--630.

\bibitem[Qin and Lawless, 1994]{qin1994empirical}
Qin, J. and Lawless, J. (1994).
\newblock Empirical likelihood and general estimating equations.
\newblock {\em The Annals of Statistics}, 22:300--325.

\bibitem[Qin and Zhang, 1997]{qin1997goodness}
Qin, J. and Zhang, B. (1997).
\newblock A goodness-of-fit test for logistic regression models based on case-control data.
\newblock {\em Biometrika}, 84(3):609--618.

\bibitem[Rubin, 1974]{rubin1974estimating}
Rubin, D.~B. (1974).
\newblock Estimating causal effects of treatments in randomized and nonrandomized studies.
\newblock {\em Journal of educational Psychology}, 66(5):688.

\bibitem[Sugiyama et~al., 2012]{sugiyama2012density}
Sugiyama, M., Suzuki, T., and Kanamori, T. (2012).
\newblock {\em Density Ratio Estimation in Machine Learning}.
\newblock Cambridge University Press.

\bibitem[Van Der~Vaart and Wellner, 1996]{van1996weak}
Van Der~Vaart, A.~W. and Wellner, J.~A. (1996).
\newblock {\em Weak convergence and empirical processes: with applications to statistics}.
\newblock Springer New York.

\bibitem[Zhang and Chen, 2022]{zhang2022density}
Zhang, A.~G. and Chen, J. (2022).
\newblock Density ratio model with data-adaptive basis function.
\newblock {\em Journal of Multivariate Analysis}, page 105043.

\end{thebibliography}

\setcounter{section}{0}
\onecolumn
\arxivtitle{Supplementary Materials}



\section{Proofs}
\label{supp_proofs}

\begin{proof}[Proof of Theorem~\ref{DRM_identifiable}]
We prove this by contradiction. 
Without loss of generality, assume $\bbeta^{'}=\bsm{0}$ and hence $\alpha^{'}=0$. 
Suppose
\begin{align}
\exp \{\alpha_0 + \bbeta_0^{\top} \bq(y)\} \mathrm{d}G_{0}(y) = \mathrm{d}G_{0}^{'}(y), 
\label{contradiction_assumption}
\end{align}
we want to show that $(\bbeta_0, G_0)=(\bsm{0}, G_0^{'})$. 
If we can show that $\bbeta_0=\bsm{0}$, then the proof is complete. 

Recall that with a known $G_0$, the normalizing constant $\alpha$ is a function of $\bbeta$: 
\[
\alpha = \alpha(\bbeta) = -\log \int \exp \{ \bbeta^\top \bq(y) \} \mathrm{d}G_{0}(y).
\]
Also, recall that we restrict the choices of $G_0$ to those with 
\begin{align}
\mathbbm{E}_{G_0}[\bq(Y)] = \bsm{0}.
\label {ref_const_supp}
\end{align}

We define a function 
\[
\bsm{H}(\bbeta)= \int \bq(y) \exp \{\alpha + \bbeta^{\top} \bq(y)\} \mathrm{d}G_{0}(y).
\]
By assumptions \eqref{contradiction_assumption} and \eqref{ref_const_supp}, we have 
\[
\bsm{H}(\bbeta_0)= \int \bq(y) \exp \{\alpha_0 + \bbeta_0^{\top} \bq(y)\} \mathrm{d}G_{0}(y)
= \int \bq(y) \mathrm{d}G_{0}^{'}(y) = \bsm{0}.
\]
Also observe that by assumption \eqref{ref_const_supp}, 
\[
\bsm{H}(\bsm{0})= \int \bq(y) \mathrm{d}G_{0}(y) = \bsm{0}.
\]
Therefore, if we can show that $\bsm{H}(\bbeta)$ is injective, then we must have $\bbeta_0=\bsm{0}$. 
We show the injectivity by proving that the Jacobian of $\bsm{H}(\bbeta)$ is positive definite. 
Note that 
\[
\nabla\alpha(\bbeta) = -\int \bq(y) \exp \{\alpha + \bbeta^{\top} \bq(y)\} \mathrm{d}G_{0}(y), 
\]
and we have 
\begin{align*}
\nabla \bsm{H}(\bbeta) 
= & \int \bq(y) [\bq(y) + \nabla\alpha(\bbeta)]^\top \exp \{\alpha + \bbeta^{\top} \bq(y)\} \mathrm{d}G_{0}(y) \\ 
= & \int \bq(y) \bq^\top(y) \exp \{\alpha + \bbeta^{\top} \bq(y)\} \mathrm{d}G_{0}(y) \\ 
& - \left[ \int \bq(y) \exp \{\alpha + \bbeta^{\top} \bq(y)\} \mathrm{d}G_{0}(y) \right] \left[ \int \bq(y) \exp \{\alpha + \bbeta^{\top} \bq(y)\} \mathrm{d}G_{0}(y) \right]^\top, 
\end{align*}
which can be seen as a covariance matrix and is therefore positive definite because elements of $\bq(y)$ are assumed to be linearly independent.

This completes the proof.
\end{proof}

\begin{proof}[Proof of Theorem~\ref{consistency}]

We build on the proof strategy of Theorem 2.2 in \citet{murphy1997maximum} using empirical process theory.
Recall that 
\[
\mathrm{d}\widehat G_0(y_{ri})
=\widehat p_{ri} = 
\Big \{ 
\sum_{k, j} \exp \{ \widehat \alpha_{k j} + \bbeta^\top(x_{kj};\widehat \btheta_k) \bq(y_{ri}) \}
+ \widehat \blambda^{\top} \bq(y_{ri})
\Big \}^{-1}.
\]
By replacing $(\widehat \balpha,\widehat \btheta)$ by their true values $(\balpha^*, \btheta^*)$ and $\widehat \blambda$ by $\bsm{0}$, we define 
\[
\widetilde p_{ri} = 
\Big \{ 
\sum_{k,j} \exp \{ \alpha_{kj}^* + \bbeta^\top(x_{kj};\btheta_k^*) \bq(y_{ri}) \}
\Big \}^{-1}, 
\]
and define correspondingly its induced CDF: $\widetilde G_0(y) \coloneqq \sum_{r,i} \widetilde p_{ri} \mathbbm{1}(y_{ri}\leq y)$.

Since $\{\bbeta(x; \btheta_k): \btheta_k \in \mathcal{B} \}$ is a uniformly bounded Glivenko--Cantelli class, by the Glivenko--Cantelli Theorem \citep{van1996weak}, 
\[
\widetilde G_0(y) - \mathbbm{E}\Big[\sum_{r,i} \widetilde p_{ri} \mathbbm{1}(y_{ri}\leq y)\Big]
\to 0
\]
uniformly for $y \in \mathcal{Y}$ as $n \to \infty$. 

We next work out the expression for $\mathbbm{E}[\sum_{r,i} \widetilde p_{ri} \mathbbm{1}(y_{ri}\leq y)]$.  
We first observe that $\widetilde p_{ri}$ can be expressed by
\[
\widetilde p_{ri} = \frac{\mathrm{d}G_0^*(y_{ri})}{\sum_{k,j}\mathrm{d}G_k^*(y_{ri} | x_{kj})}, 
\]
where $G_0^*$ and $G_k^*$ are the true CDF for $G_0$ and $G_k$.
Regarding $y_{ri}$ as a sample from the conditional distribution $G_r^*(\cdot|x_{ri})$, 
we then have  
\begin{align*} 
\mathbbm{E}\Big[\sum_{r,i} \widetilde p_{ri} \mathbbm{1}(y_{ri}\leq y)\Big]
& = \sum_{r,i} \int \frac{\mathrm{d}G_0^*(t)}{\sum_{k,j}\mathrm{d}G_k^*(t | x_{kj})} \mathbbm{1}(t\leq y) \mathrm{d}G_r^*(t|x_{ri})  \\ 
& = \int \frac{\sum_{r,i} \mathrm{d}G_r^*(t|x_{ri})}{\sum_{k,j}\mathrm{d}G_k^*(t | x_{kj})} \mathbbm{1}(t\leq y) \mathrm{d}G_0^*(t) \\ 
& = G_0^*(y). 
\end{align*}
As a result, $\widetilde G_0(y) \to G_0^*(y)$ uniformly in $\mathcal{Y}$. 

Now, we work on $\widehat \btheta$ and $\widehat G_0$. 
By Helly's selection theorem and the Bolzano--Weierstrass theorem, every subsequence of $\{n\}$ has a further subsequence along which $\widehat G_0(y)$ converges pointwise to a limit $G_0^\dagger(y)$, and $\widehat \btheta$ also converges to a limit $\btheta^\dagger$ due to the compactness of $\mathcal{B}$. 
The limit $G_0^\dagger(y)$ is still a CDF because the sequence of $\widehat G_0(y)$ is tight, which is due to the compactness assumption on the response space $\mathcal{Y}$.

Recall that $\ell(\btheta,G_0)$ in \eqref{log-EL} is the log-likelihood function based on the observed data. 
Since $(\widehat \btheta, \widehat G_0)$ maximizes the log-likelihood, we have 
\begin{align}
n^{-1}\ell(\btheta^*, \widetilde G_0) - n^{-1}\ell(\widehat \btheta, \widehat G_0) \leq 0.
\label{proof_KL1}
\end{align}
Following the same reasoning as in \citet{murphy1997maximum} (Appendix A.1, page 974), taking limit on the left-hand side of \eqref{proof_KL1} gives the Kullback--Leibler divergence of the DRM induced by $(\btheta^*, G_0^*)$ from the DRM induced by $(\btheta^\dagger, G_0^\dagger)$.
Given that the Kullback–Leibler divergence is nonnegative, the divergence in this case must be 0, which implies that the two preceding DRMs are identical.
Combined with the identifiability of the DRM in Theorem~\ref{DRM_identifiable} and the assumption \eqref{DRM_identifiable2} on $\bbeta(x;\btheta_k)$, we have $(G_0^\dagger,\btheta^\dagger)=(G_0^*,\btheta^*)$. 
Therefore, we can conclude that $(\widehat \btheta, \widehat G_0)$ uniformly converges to the truth.

\end{proof}

\section{Additional numerical results and details of the experiments in the main text}
\label{supp_details}

\subsection{Data generated from Gaussian distribution}

The true distributions for the potential outcomes are given by: 
\begin{align*}
Y(0) \sim N(2, 2), 
\quad
Y(1) | X_1 \sim \frac{1}{2} N(2+3X_1-0.5X_1^2, 2) + \frac{1}{2} N(2+5X_1-0.5X_1^2, 2),
\end{align*}
with $X_1 \sim N(1,1)$.
We report the numerical performance of the ATE and QTET estimators considered in the main text in Table~\ref{S1_ATE_QTET}.
We also depict the distribution shift from the distribution of $Y(0)$ to that of $Y(1)$ from the CDF plots in Figure~\ref{S1_CDF}.

\begin{table}[h!]
\centering
\caption{Gaussian data. Absolute biases, standard errors (SEs), and root mean square errors (RMSEs) (all after $\times 10$) of various estimators of ATE; and biases and SEs (both after $\times 10$) of various estimators of QTET. 
DRM methods use $\bq(y)=(y, y^2)^\top$ with $\bbeta^\top_{\mathrm{cor}}(x,a)=(x_1, x_1^2, x_2) \btheta_a$, $\bbeta^\top_{\mathrm{mis1}}(x,a)=(x_1, x_2) \btheta_a$, or $\bbeta^\top_{\mathrm{mis2}}(x,a)=x_1 \btheta_a$. 
Other methods either include the full or some partial covariates.}
\label{S1_ATE_QTET}
\resizebox{\textwidth/2}{!}{
\begin{tabular}{lrrr}
\hline
Methods & Abs bias  & SE & RMSE   \\ \hline
{\bf DRM(full)} &1.266&1.519&1.614  \\ \hline
DRM(mis1)             &1.228&1.343&1.527  \\ \hline
DRM(mis2)             &9.946&1.306&10.032  \\ \hline
G-formula(full) &1.028&1.281&1.283 \\ \hline
G-formula(mis1) &1.067&1.350&1.351 \\ \hline
G-formula(mis2) &9.940&1.332&10.029 \\ \hline
IPW(full) &14.027&12.725&18.778 \\ \hline
IPW(mis1) &12.663&5.167&13.676 \\ \hline
IPW(mis2) &9.931&1.333&10.020 \\ \hline
AIPW(full) &1.261&1.935&1.937 \\ \hline
AIPW(mis1) &2.724&2.235&3.423 \\ \hline
AIPW(mis2) &9.931&1.333&10.020 \\ \hline
\end{tabular}
}
\resizebox{\textwidth}{!}{
\begin{tabular}{lrrrrrrrrrr}
\hline
Methods & \multicolumn{2}{c}{10\%} & \multicolumn{2}{c}{30\%} & \multicolumn{2}{c}{50\%} & \multicolumn{2}{c}{70\%} & \multicolumn{2}{c}{90\%}  \\ \hline
& Bias & SE & Bias & SE & Bias & SE & Bias & SE & Bias & SE \\ \hline
{\bf DRM(full)} & -0.31&3.69&-0.10&2.59&-0.20&2.36&-0.13&2.55&-0.43&3.71 \\ \hline
DRM(mis1) & 0.20&3.71&0.25&2.58&-0.25&2.34&-0.50&2.55&-0.69&3.68 \\ \hline
DRM(mis2) & -0.08&3.71&0.25&2.51&-0.05&2.16&-0.20&2.03&-0.31&2.33 \\ \hline
IPW(full) & 9.21&5.43&9.77&3.80&9.86&3.60&9.99&3.47&9.95&3.36 \\ \hline
IPW(mis1) & 9.30&6.09&9.81&3.80&9.74&3.08&9.82&2.76&9.78&2.91 \\ \hline
IPW(mis2) & -0.48&4.05&-0.25&2.65&-0.27&2.19&-0.16&2.09&-0.23&2.51 \\ \hline
CF(QR; full) & 3.58&3.29&1.23&2.35&0.63&2.11&0.34&2.18&0.08&2.68 \\ \hline
CF(QR; mis1) & 4.75&3.13&-2.42&2.32&-3.68&2.10&-2.30&2.21&3.85&2.87 \\ \hline
CF(QR; mis2) & 4.67&3.04&-2.49&2.15&-3.70&1.82&-2.32&1.74&3.92&2.03 \\ \hline
CF(logit; full) & 1.15&3.79&0.18&2.74&0.00&2.49&0.46&2.82&-1.21&5.23 \\ \hline
CF(logit; mis1) & 1.13&3.77&0.19&2.75&0.00&2.49&0.46&2.80&-1.23&5.31 \\ \hline
CF(logit; mis2) & 1.01&3.73&0.18&2.64&0.04&2.14&0.58&2.01&-0.23&2.38 \\ \hline
\end{tabular}
}
\end{table}

\begin{figure}[h!]
\centering 
\includegraphics[width = \textwidth*8/9, keepaspectratio]{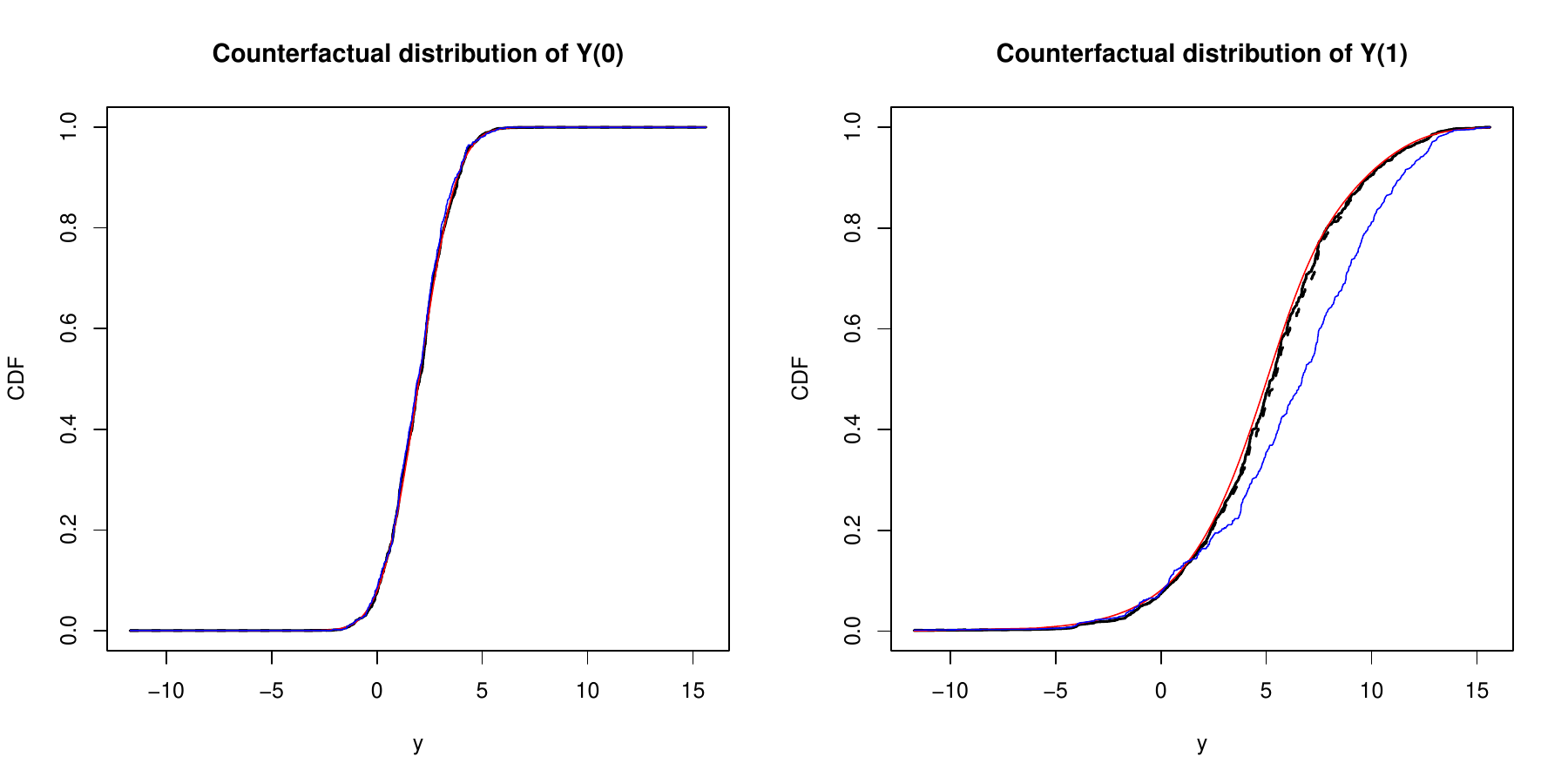}
\caption{CDF plots for $Y(a)$ based on one simulation repetition of Gaussian data. 
All use DRM with $\bq(y)=(y, y^2)^\top$.
Solid black: $\bbeta_{\mathrm{cor}}(x,a)$; 
dashed black: $\bbeta_{\mathrm{mis1}}(x,a)$; 
solid red: the truth; and 
solid blue: the empirical CDF.}
\label{S1_CDF}
\end{figure}

\subsection{Data generated from gamma distribution}

The true distributions for the potential outcomes are given by: 
\begin{align*}
Y(a) | X_1 & \sim \frac{1}{2} \mathrm{Gamma}((2+a)(X_1+1), 0.5(a+1)) + \frac{1}{2} \mathrm{Gamma}((3+a)(X_1+1), 0.5(a+1)), 
\end{align*}
for $a = 0,1$, with $X_1 \sim \mathrm{Gamma}(1,0.5)$.
The numerical results for the performance of the ATE and QTET estimators are provided in Table~\ref{S2_ATE_QTET}. 
The CATE and CDF plots for the DRM method are given in Figure~\ref{S2_CATE}.

\begin{table}[h!]
\centering
\caption{Gamma data. Absolute biases, SEs, and RMSEs (all after $\times 10$) of various estimators of ATE; and biases and SEs (both after $\times 10$) of various estimators of QTET. 
DRM methods use $\bq_{\mathrm{cor}}(y)=(y, \log y)^\top$ or $\bq_{\mathrm{mis}}(y)=(y, y^2)^\top$ with $\bbeta^\top_{\mathrm{cor}}(x,a)=(x_1, x_2) \btheta_a$ or $\bbeta^\top_{\mathrm{mis}}(x,a)=x_1 \btheta_a$. 
Other methods either include the full or some partial covariates.}
\label{S2_ATE_QTET}
\resizebox{\textwidth/2}{!}{
\begin{tabular}{lrrr}
\hline
Methods & Abs bias  & SE & RMSE   \\ \hline
{\bf DRM(full)} &0.861&1.007&1.078  \\ \hline
DRM(mis1) &0.833&1.013&1.041  \\ \hline
DRM(mis2) &11.219&1.253&11.289  \\ \hline
G-formula(full) &0.801&0.996&0.996  \\ \hline
G-formula(mis) &11.225&1.254&11.295  \\ \hline
IPW(full) &18.376&59.812&62.570  \\ \hline
IPW(mis) &11.226&1.252&11.296  \\ \hline
AIPW(full) &1.754&7.624&7.636  \\ \hline
AIPW(mis) &11.225&1.254&11.295  \\ \hline
\end{tabular}
}
\resizebox{\textwidth}{!}{
\begin{tabular}{lrrrrrrrrrr}
\hline
Methods & \multicolumn{2}{c}{10\%} & \multicolumn{2}{c}{30\%} & \multicolumn{2}{c}{50\%} & \multicolumn{2}{c}{70\%} & \multicolumn{2}{c}{90\%}  \\ \hline
& Bias & SE & Bias & SE & Bias & SE & Bias & SE & Bias & SE \\ \hline
{\bf DRM(full)} & 0.12&1.07&-0.31&1.17&-0.38&1.45&-0.03&1.94&1.25&3.50 \\ \hline
DRM(mis1) & 1.30&1.20&0.11&1.19&-1.12&1.49&-1.14&2.05&1.84&3.85 \\ \hline
DRM(mis2) & 4.24&1.19&5.59&1.31&6.73&1.58&8.32&2.07&11.76&3.59 \\ \hline
IPW(full) & -1.56&6.70&-5.57&9.41&-9.84&11.52&-13.24&12.47&-11.53&12.42 \\ \hline
IPW(mis) & 3.94&1.29&5.69&1.36&7.01&1.65&8.52&2.16&11.05&3.67 \\ \hline
CF(QR; full) & 0.95&0.86&0.57&0.94&0.45&1.10&0.35&1.43&0.21&2.51 \\ \hline
CF(QR; mis) & 5.22&1.14&6.45&1.20&7.65&1.40&9.07&1.78&11.60&2.98 \\ \hline
CF(logit; full) & 0.40&1.18&0.17&1.25&0.11&1.50&0.46&2.00&-1.36&4.23 \\ \hline
CF(logit; mis) & 4.40&1.28&5.91&1.38&7.22&1.60&9.25&2.09&11.21&3.58 \\ \hline
\end{tabular}
}
\end{table}

\begin{figure}[h!]
\centering 
\includegraphics[width = \textwidth*4/9, keepaspectratio]{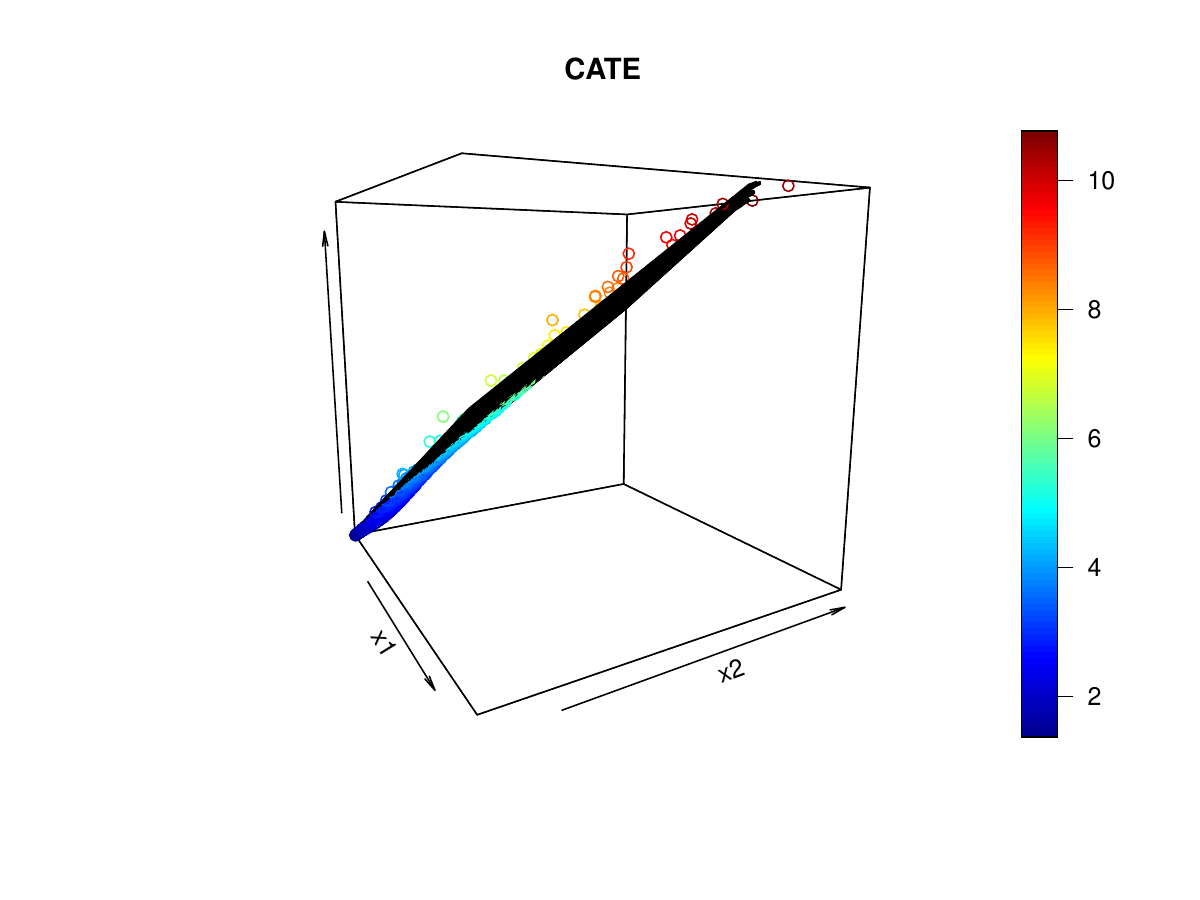}
\includegraphics[width = \textwidth*4/9, keepaspectratio]{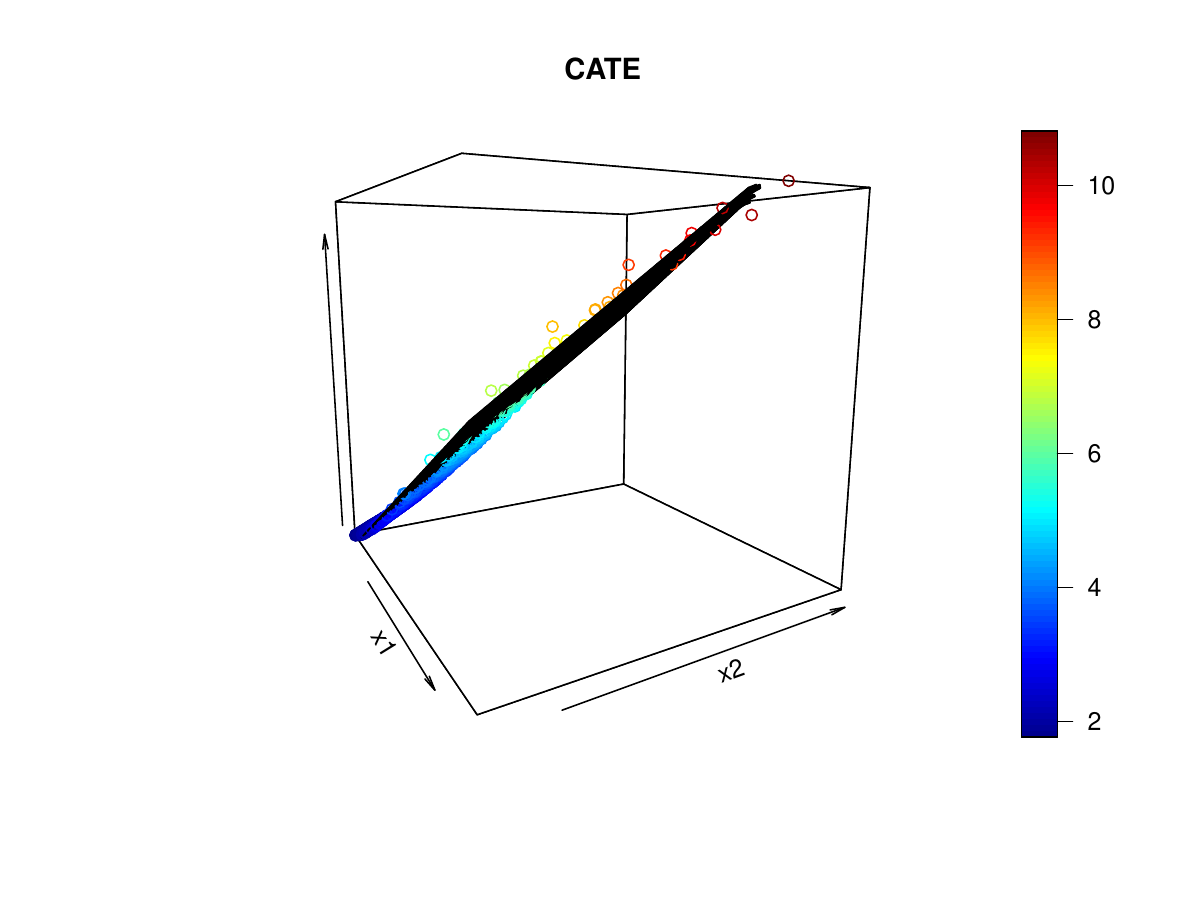} 
\includegraphics[width = \textwidth*8/9, keepaspectratio]{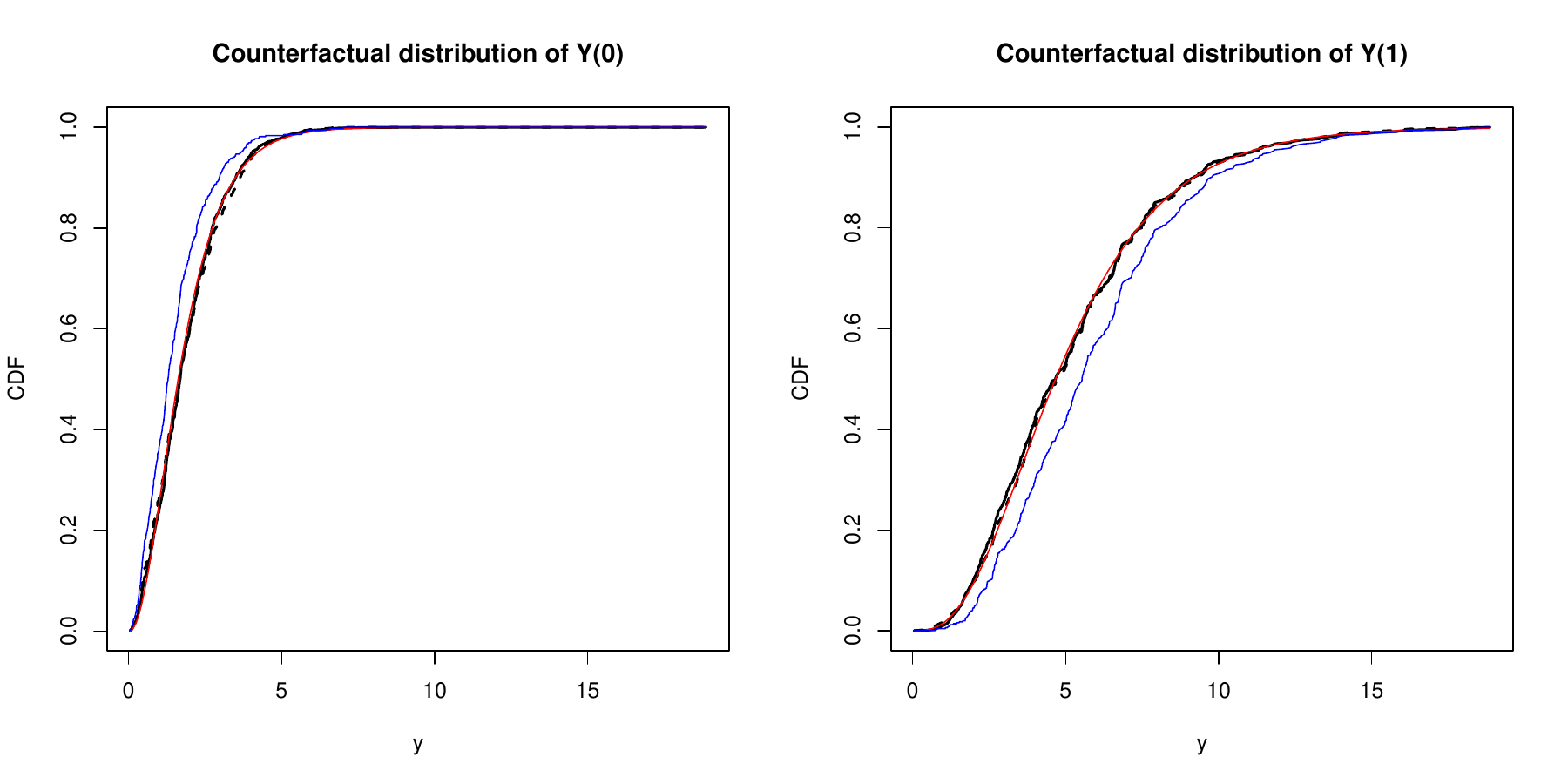}
\caption{CATE and CDF plots based on one simulation repetition of gamma data. 
All use DRM with $\bbeta^\top_{\mathrm{cor}}(x,a)=(x_1, x_2) \btheta_a$.
Top left: with $\bq_{\mathrm{cor}}(y)=(y, \log y)^\top$; 
top right: with $\bq_{\mathrm{mis}}(y)=(y, y^2)^\top$; the black surface depicts the truth.
Bottom: CDF for $Y(a)$, with solid black using $\bq_{\mathrm{cor}}(y)$; dashed black using $\bq_{\mathrm{mis}}(y)$; 
solid red being the truth; and solid blue being the empirical CDF.}
\label{S2_CATE}
\end{figure}

\subsubsection{Real-data analysis: STAR project}

The estimated CATEs for regular versus small classes, grouped by students' demographics, are depicted in Figure~\ref{real_CATE_supp}.
The figure also includes the CDF plot for the estimated counterfactual distributions.

\begin{figure}[h!]
\centering 
\includegraphics[width = \textwidth*4/5, keepaspectratio]{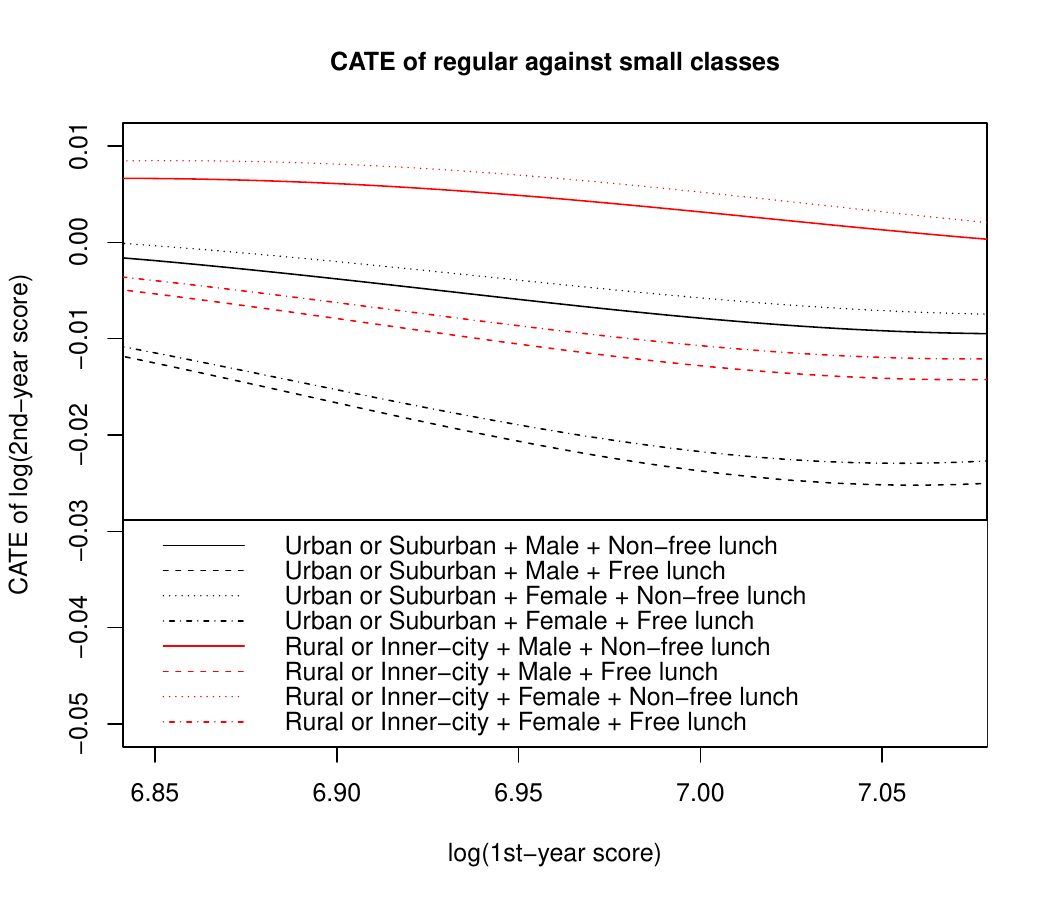}
\includegraphics[width = \textwidth*4/5, keepaspectratio]{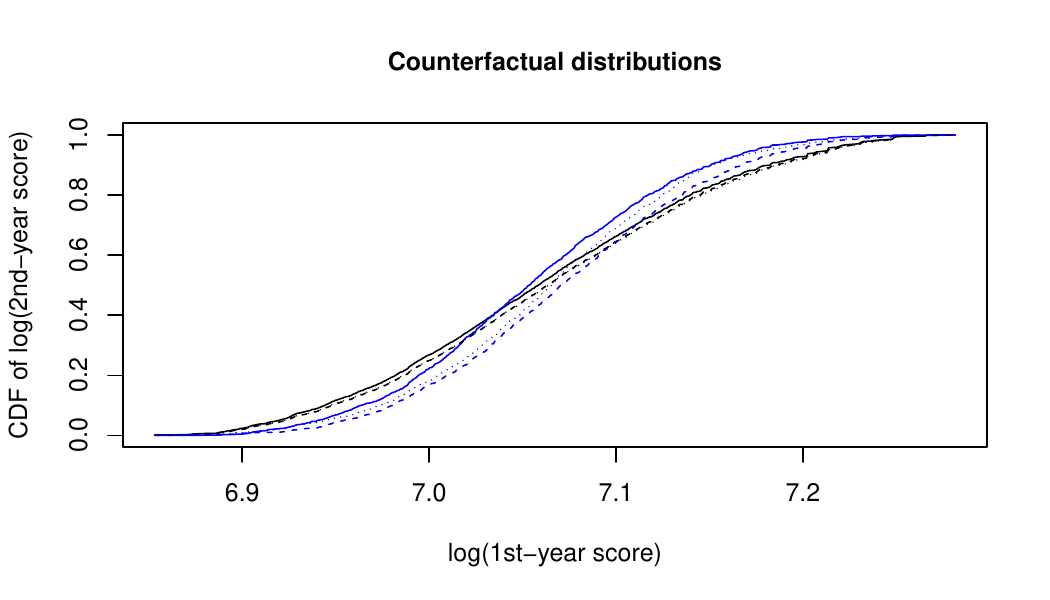}
\caption{Plot of the estimated CATEs for the eight groups of students in STAR real data, and plot of the estimated CDF of the counterfactual distributions. 
For the CDF plot, solid: regular class; dashed: small class; dotted: regular-with-aide class. 
Black: DRM CDF estimator; blue: empirical CDF.}
\label{real_CATE_supp}
\end{figure}

\clearpage

\section{Additional experiments}
\label{supp_add_exp}

\subsection{Data generated from Poisson distribution}

To mimic the scenario of count data, we generate data from the Poisson distribution:
\begin{gather*} 
X_1 \sim \mathrm{Unif} (-1,1), 
\quad
X_2 \sim N(0,1), 
\quad
A \sim \Bernoulli (\pi(x)), 
\, 
\text { with } \pi(x)=\mathrm{ilogit}(0.5-0.5x_1-2x_1^2-0.5x_2),
\\ 
Y |X=x, A=a \sim \mathrm{Pois} (\lambda(x,a)), 
\, 
\text { with } \lambda(x,a)=\exp\{5-0.1(a+1)x1-ax_1^2-0.1(a+1)x_2)\}.
\end{gather*} 
The true ATE is $-35.753$, and the true QTET at levels $p=0.1, 0.3, 0.5, 0.7, 0.9$ are $(-50,-36,-26,-16,-1)$.
For all the methods, we consider both using the full set of covariates $(x_1, x_1^2, x_2)$ (``full'') and using only covariate $x_2$ (``mis'').
In the DRM, we use the best-performing basis function we find: $\bq(y)=(\sqrt{y},y)^\top$, which leads to a correct specification, and we specify $\bbeta(x,a)$ as a linear function of the covariates included in the model. 
The results are provided in Figure~\ref{S4_boxplot} and Table~\ref{S4_ATE_QTET}. 
We observe that when the full set of covariates is used, the DRM-based ATE estimator outperforms the estimators of all other methods. The DRM-based QTET estimators also perform well, typically exhibiting small biases. However, when only $x_2$ is used, all methods tend to produce much larger biases, indicating that $x_1$ may contain crucial information about the causal effects in this scenario.

\subsection{Data generated from exponential distribution}

Finally, we also generate data from the exponential distribution $\mathrm{Exp} (\lambda)$ with $\lambda$ being the rate parameter:
\begin{gather*} 
X_1 \sim \mathrm{Unif} (-1,1), 
\quad
X_2 \sim \mathrm{Exp} (1), 
\quad
A \sim \Bernoulli (\pi(x)), 
\, 
\text { with } \pi(x)=\mathrm{ilogit}(1-x_1+0.5x_2-x_1x_2),
\\ 
Y |X=x, A=a \sim \mathrm{Exp} (\lambda(x,a)), 
\, 
\text { with } \lambda(x,a)=0.1\{1+a(x_1+1)+0.5(a+1)x_2+(x_1+1)x_2\}.
\end{gather*} 
The true ATE is $-2.063$, and the true QTET at levels $0.1, 0.3, 0.5, 0.7, 0.9$ are $(-0.167, -0.600, -1.245, -2.335, -4.927)$.
We fit the DRM to the data with the best-performing basis function we find: $\bq(y)=\sqrt{y}$. 
For $\bbeta(x,a)$ specification, we consider a full model with $\bbeta^\top_{\mathrm{cor}}(x,a)=(x_1, x_2, x_1x_2) \btheta_a$ (``full'') and a partial model with the interaction term $x_1x_2$ dropped: $\bbeta^\top_{\mathrm{mis}}(x,a)=(x_1, x_2) \btheta_a$ (``mis''). 
For other competing methods, we also consider the full model that uses $(x_1, x_2, x_1x_2)$ (``full'') and a model that uses only $(x_1, x_2, x_1x_2)$ (``mis'').
The results are provided in Figure~\ref{S3_boxplot} and Table~\ref{S3_ATE_QTET}.
For both ATE and QTET estimation, the DRM estimators demonstrate satisfactory performance whether the model is correctly specified or misspecified. Interestingly, all methods tend to produce smaller or comparable RMSEs when $x_1x_2$ is excluded from the model. This suggests that, in this case, the interaction term does not significantly impact the causal effects.

\clearpage
\begin{figure}[H]
\centering 
\includegraphics[width=\textwidth,keepaspectratio]{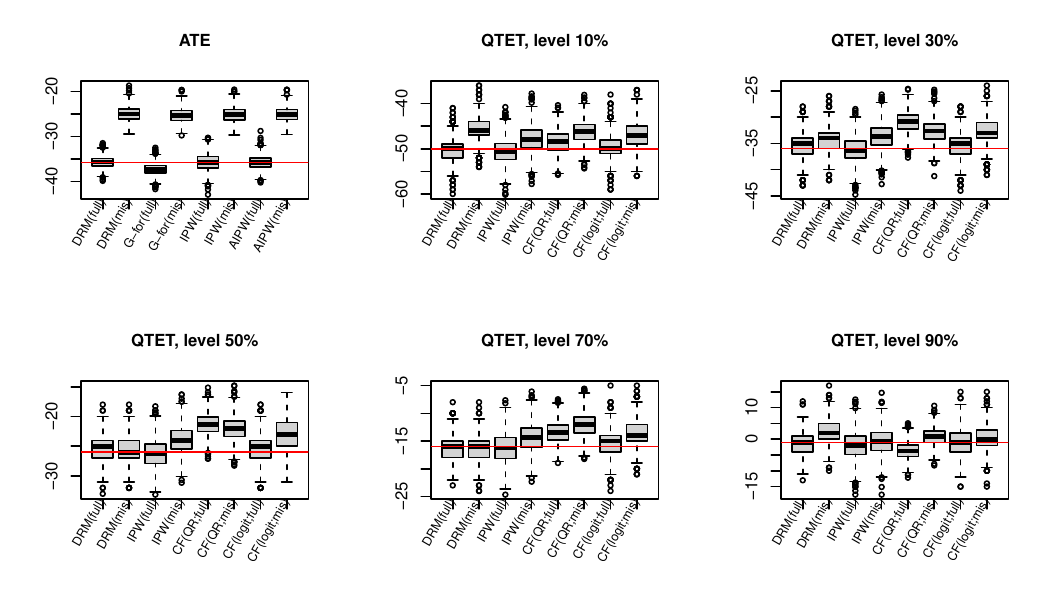}
\caption{Boxplots of the ATE and QTET estimators using various methods based on 1000 simulation repetitions of Poisson data. 
The red lines depict the truth.}
\label{S4_boxplot}
\end{figure}

\begin{table}[H]
\centering
\caption{Poisson data. Absolute biases, SEs, and RMSEs of various estimators of ATE; and biases and SEs of various estimators of QTET. 
Other methods either include the full or some partial covariates.}
\label{S4_ATE_QTET}
\resizebox{\textwidth/2}{!}{
\begin{tabular}{lrrr}
\hline
Methods & Abs bias  & SE & RMSE   \\ \hline
{\bf DRM(full)} &1.003&1.288&1.288  \\ \hline
DRM(mis) &10.772&1.608&10.892  \\ \hline
G-formula(full) &1.737&1.360&2.075  \\ \hline
G-formula(mis) &10.453&1.595&10.574  \\ \hline
IPW(full) &1.501&1.889&1.889  \\ \hline
IPW(mis) &10.608&1.725&10.747  \\ \hline
AIPW(full) &1.177&1.506&1.506  \\ \hline
AIPW(mis) &10.604&1.608&10.726 \\ \hline
\end{tabular}
}
\resizebox{\textwidth}{!}{
\begin{tabular}{lrrrrrrrrrr}
\hline
Methods & \multicolumn{2}{c}{10\%} & \multicolumn{2}{c}{30\%} & \multicolumn{2}{c}{50\%} & \multicolumn{2}{c}{70\%} & \multicolumn{2}{c}{90\%}  \\ \hline
& Bias & SE & Bias & SE & Bias & SE & Bias & SE & Bias & SE \\ \hline
{\bf DRM(full)} & -0.36&2.85&0.81&2.48&0.59&2.22&-0.10&2.33&-0.31&3.34 \\ \hline
DRM(mis) & 4.37&2.83&1.73&2.44&0.12&2.18&-0.41&2.38&3.35&3.75 \\ \hline
IPW(full) & -0.57&2.79&-0.22&2.52&-0.20&2.48&-0.30&2.75&-0.86&4.43 \\ \hline
IPW(mis) & 2.17&2.94&2.33&2.48&2.09&2.37&1.62&2.59&0.37&4.12 \\ \hline
CF(QR; full) & 1.53&2.55&5.07&2.02&4.67&1.87&2.53&2.02&-2.62&2.74 \\ \hline
CF(QR; mis) & 3.74&2.45&3.31&2.16&3.97&2.07&3.99&2.14&1.82&2.72 \\ \hline
CF(logit; full) & 0.48&2.90&0.67&2.55&0.63&2.44&0.48&2.61&0.10&3.94 \\ \hline
CF(logit; mis) & 3.18&2.99&3.22&2.55&2.95&2.45&2.44&2.60&1.25&3.90 \\ \hline
\end{tabular}
}
\end{table}

\clearpage
\begin{figure}[H]
\centering 
\includegraphics[width=\textwidth,keepaspectratio]{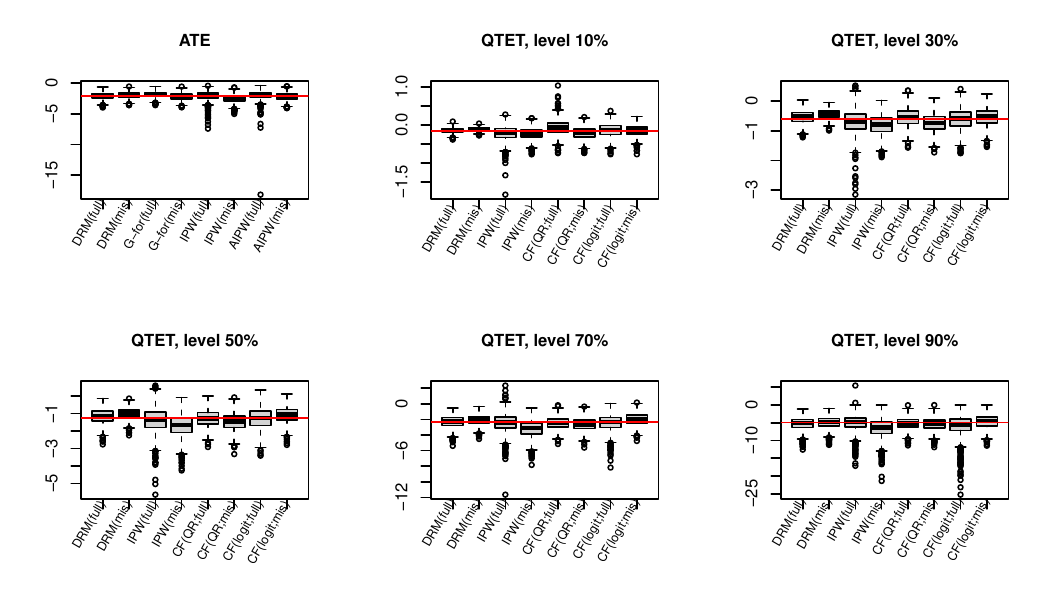}
\caption{Boxplots of the ATE and QTET estimators using various methods based on 1000 simulation repetitions of exponential data. 
The red lines depict the truth.}
\label{S3_boxplot}
\end{figure}

\begin{table}[H]
\centering
\caption{Exponential data. Absolute biases, SEs, and RMSEs (all after $\times 10$) of various estimators of ATE; and biases and SEs (both after $\times 10$) of various estimators of QTET. 
Other methods either include the full or some partial covariates.}
\label{S3_ATE_QTET}
\resizebox{\textwidth/2}{!}{
\begin{tabular}{lrrr}
\hline
Methods & Abs bias  & SE & RMSE   \\ \hline
{\bf DRM(full)} &4.392&5.524&5.535  \\ \hline
DRM(mis) &4.168&5.052&5.141  \\ \hline
G-formula(full) &4.354&5.039&5.412  \\ \hline
G-formula(mis) &4.481&5.549&5.660  \\ \hline
IPW(full) &5.053&7.078&7.079  \\ \hline
IPW(mis) &6.343&6.768&8.104  \\ \hline
AIPW(full) &5.231&8.418&8.474  \\ \hline
AIPW(mis) &4.953&6.123&6.244  \\ \hline
\end{tabular}
}
\resizebox{\textwidth}{!}{
\begin{tabular}{lrrrrrrrrrr}
\hline
Methods & \multicolumn{2}{c}{10\%} & \multicolumn{2}{c}{30\%} & \multicolumn{2}{c}{50\%} & \multicolumn{2}{c}{70\%} & \multicolumn{2}{c}{90\%}  \\ \hline
& Bias & SE & Bias & SE & Bias & SE & Bias & SE & Bias & SE \\ \hline
{\bf DRM(full)} & 0.23&0.72&0.64&2.16&0.81&4.25&0.44&7.54&-3.49&18.06 \\ \hline
DRM(mis) & 0.60&0.49&1.67&1.53&2.52&3.23&2.88&6.31&-0.28&16.63 \\ \hline
IPW(full) & -0.58&1.97&-0.93&4.27&-1.29&7.29&-1.38&11.72&-1.46&22.46 \\ \hline
IPW(mis) & -0.64&1.46&-2.03&3.33&-4.60&6.36&-8.68&11.07&-16.84&26.41 \\ \hline
CF(QR; full) & 0.98&1.84&0.54&3.04&-0.31&4.73&-1.40&7.72&-2.95&16.56 \\ \hline
CF(QR; mis) & -0.48&1.55&-1.32&3.09&-2.23&5.01&-3.12&8.11&-4.63&17.40 \\ \hline
CF(logit; full) & 0.29&1.79&-0.01&3.60&-0.38&6.17&-1.11&10.61&-10.11&30.04 \\ \hline
CF(logit; mis) & 0.12&1.42&0.64&2.88&1.76&4.65&3.52&7.73&2.32&18.18 \\ \hline
\end{tabular}
}
\end{table}

\end{document}